\newtheorem{mythe}{Theorem}
\newtheorem{mydef}{Definition}
\newtheorem{myrem}{Remark}
\def\hlinew#1{%
  \noalign{\ifnum0=`}\fi\hrule \@height #1 \futurelet
   \reserved@a\@xhline}
\newcommand{\new}{\textcolor{black}}
\newcommand{\newnew}{\textcolor{black}}
\newcommand{\newnewnew}{\textcolor{black}}
\newcommand{\final}{\textcolor{black}}
\newcommand{\system}{\textrm{LinkMirage}}
\newcommand{\facebook}{\textrm{870K}}
\begin{document}
\title{\system{}: Enabling Privacy-preserving Analytics on Social Relationships}
\vspace{-2em}
\author{Changchang Liu, Prateek Mittal\\ Email: cl12@princeton.edu, pmittal@princeton.edu\\ Department of Electrical Engineering, Princeton University}
\IEEEoverridecommandlockouts
\makeatletter\def\@IEEEpubidpullup{9\baselineskip}\makeatother
\IEEEpubid{\parbox{\columnwidth}{Permission to freely reproduce all or part
    of this paper for noncommercial purposes is granted provided that
    copies bear this notice and the full citation on the first
    page. Reproduction for commercial purposes is strictly prohibited
    without the prior written consent of the Internet Society, the
    first-named author (for reproduction of an entire paper only), and
    the author's employer if the paper was prepared within the scope
    of employment.  \\
    NDSS '15, 8-11 February 2015, San Diego, CA, USA\\
    Copyright 2015 Internet Society, ISBN TBD\\
    http://dx.doi.org/10.14722/ndss.2015.23xxx
}
\hspace{\columnsep}\makebox[\columnwidth]{}}
\maketitle
Social relationships present a critical foundation for many real-world applications.
However, both users and online social network (OSN) providers are hesitant to share social
relationships with untrusted external applications due to privacy concerns.
In this work, we design LinkMirage, a system that mediates privacy-preserving access to
social relationships. LinkMirage takes users' social relationship graph as an input,
obfuscates the social graph topology, and provides untrusted external applications with 
an obfuscated view of the social relationship graph while preserving graph utility. \\
\indent Our key contributions are (1) a novel algorithm for obfuscating social relationship 
graph while preserving graph utility, (2) theoretical and experimental analysis 
of privacy and utility using real-world social network topologies, including a large-scale 
Google+ dataset with 940 million links. Our experimental results demonstrate that LinkMirage 
provides up to 10x improvement in privacy guarantees compared to the state-of-the-art approaches. 
Overall, LinkMirage enables the design of real-world applications such as recommendation systems, graph analytics, 
anonymous communications, and Sybil defenses while protecting the privacy of social relationships.
\section{Introduction}
Online social networks (OSNs) have revolutionized the way our society 
interacts and communicates with each other. Under the hood, OSNs can be viewed as a special graph 
structure composed of individuals (or organizations) and connections between these entities. 
These \emph{social relationships} represent sensitive relationships between entities, for example, 
trusted friendships or important interactions in Facebook, Twitter, or Google+, which users 
want to preserve the security and privacy of.\\
\indent At the same time, an increasing number of third party applications rely on users' social relationships 
(these applications can be external to the OSN). \new{E-commerce applications can leverage social relationships for improving 
sales \cite{kim:PICEC07}, and data-mining researchers also rely on the social relationships for functional 
analysis \cite{page:stanford99,newman:PNAS06}. Social relationships can be used 
to mitigate spam\cite{Mislove:NDSI08}. Anonymous communication systems can improve client anonymity by leveraging users' social relationships \cite{Dingledine:USENIX04,nagaraja:pet07,Mittal:piscesNDSS13}. 
State-of-the-art Sybil defenses rely on social trust relationships to detect attackers 
\cite{Yu:IEEES&P08, Danezis:NDSS09,liu2015exploiting}.}\\
\indent However, both users and the OSN providers are hesitant to share social relationships/graphs with these applications due to privacy concerns. 
For instance, a majority of users are exercising privacy controls provided by popular OSNs such as Facebook, Google+ and LinkedIn to limit access to their social relationships~\cite{dey:sesoc12}. 
Privacy concerns arise because external applications that rely on users' social relationships can either explicitly
 reveal this information to an 
adversary, or allow the adversary to 
perform inference attacks~\cite{gong:IMC12,liben:JASIST07,Narayaran:S&P09,srivatsa:CCS12,nilizadeh:CCS14,ji:CCS14}. These concerns hinder the deployment of many real-world applications. \final{Thus, there exist fundamentally conflicting requirements for any link obfuscation mechanism: protecting privacy for the sensitive links in social networks and preserving utility
of the obfuscated graph for use in real-world applications.}\\
\indent In this work, we design \system{}, a system that mediates privacy-preserving access to 
social relationships. \system{} takes users' social relationship graph as an input, either 
via an OSN operator or via individual user subscriptions. Next, \system{} \emph{obfuscates} 
the social graph topology to protect the privacy of users' social contacts (edge/link privacy, 
not vertex privacy). \system{} then provides external applications such as \new{graph analytics} and anonymity 
systems~\cite{Dingledine:USENIX04,nagaraja:pet07,Mittal:piscesNDSS13} with an obfuscated view of 
the social relationship graph. Thus, \system{} provides a trade-off between securing the 
confidentiality of social relationships, and enabling the design of social relationship 
based applications. \\
\indent {We present a novel obfuscation algorithm that first clusters social graphs, and then
anonymizes intra-cluster links and inter-cluster links, respectively. We obfuscate links in a 
manner that preserves the key structural properties of social graphs. While our approach is of 
interest even for static social graphs, we go a step further 
in this paper, and consider the evolutionary dynamics of social graphs (node/link addition or deletion).
We design \system{} to be resilient to such evolutionary dynamics, by \emph{consistently} 
clustering social graphs across time instances. Consistent clustering improves both the privacy 
and utility of the obfuscated graphs.} 
\new{We show that \system{} provides strong privacy properties. Even a strategic adversary with full 
access to the obfuscated graph and prior information about the original social graph is limited 
in its ability to infer information about users' social relationships. \system{} provides up to 3x 
privacy improvement in static settings, and up to 10x privacy improvement in dynamic settings compared to 
the state-of-the-art approaches.}   \\
\indent Overall, our work makes the following contributions.
\vspace{-1em}
\begin{enumerate}[$\bullet$]
\item{First, we design \system{} to mediate privacy-preserving access to users' social relationships. \system{} obfuscates 
links in the social graph (link privacy) and provides untrusted external applications with an obfuscated view of the social graph. \system{} can achieve a good
balance between privacy and utility, under the context of both static and dynamic social network topologies.}\\
\item{
Second, \system{} provides rigorous privacy guarantees to defend against strategic adversaries with prior information of the 
social graph. We perform link privacy analysis both theoretically as well as using real-world social network
topologies. The experimental results for both a \emph{Facebook} dataset (with \facebook{} links) and a 
large-scale \emph{Google+} dataset (with 940M links) show up to 10x improvement in privacy over the state-of-the-art research.}\\
\item{
Third, we experimentally demonstrate the applicability of \system{} in real-world applications,
such as privacy-preserving graph analytics, anonymous communication and Sybil defenses.  
\system{} enables the design of social relationships based systems while simultaneously protecting the
privacy of users' social relationships.
}\\
\item{
Finally, we quantify a general utility metric for \system{}. We analyze
our utility measurement provided by \system{} both theoretically and using real-world social graphs (\emph{Facebook} and \emph{Google+}).
}
\end{enumerate}

\vspace{-1em}
\section{{Background}}
\subsection{Motivating Applications}\label{back_app}
{In this paper, we} focus our research
on protecting the \emph{link} privacy between \emph{labeled} 
vertices in social networks~\cite{Hay:csfacultypublication07,Mittal:NDSS13,Ying:SIAM08}. Mechanisms for graph analytics, anonymous communication, and Sybil defenses can leverage users' social relationships 
for enhancing security, but end up revealing users' social relationships to adversaries. 
For example, in the Tor network~\cite{Dingledine:USENIX04}, the relays' IP addresses (labels) are already publicly known 
(vertex privacy in~\cite{zhou:SIGKDD08,sala:imc11,liu:sigmod08} is not useful). Tor operators are 
hesitant to utilize social trusts to set up the Tor circuit as recommended by~\cite{nagaraja:pet07, Mittal:piscesNDSS13} since the circuit construction protocol would reveal sensitive social contact information about the users. Our proposed link-privacy techniques can thus be utilized by the Tor relay operators to enhance system security while preserving link privacy. Overall, our work focuses 
on protecting users' trust relationships while enabling the design of such systems. \\ 
\indent\system{} supports three categories of social relationship based applications: 
1) Global access to the obfuscated graph: Applications such as social network based anonymity systems~\cite{Dingledine:USENIX04,Mittal:piscesNDSS13,nagaraja:pet07} and peer-to-peer networks~\cite{Danezis:NDSS09}  
can utilize \system{} (described in Section~\ref{deploy}) to obtain a global view of privacy-preserving social graph topologies; 
2) Local access to the obfuscated graph: an individual user can query \system{} for his/her 
obfuscated social relationships (local neighborhood information), to facilitate distributed 
applications such as SybilLimit~\cite{Yu:IEEES&P08}; 
3) Mediated data analytics: \system{} can enable privacy-preserving data analytics by running 
desired functional queries (such as computing graph modularity and pagerank) on the obfuscated graph topology and only 
returning the result of the query. Existing work \cite{dwork:Springer06,dwork:ACM09} demonstrated that the 
implementation of graph analytics algorithms could leak certain information. Instead of repeatedly adding perturbations 
to the output of each graph analytics algorithm as in differential privacy \cite{dwork:Springer06,dwork:ACM09}, which would be rather costly, \system{} can 
obtain the perturbed graph just once to support multiple graph analytics. Such an approach protects the privacy of users' social 
relationships from inference attacks using query results. 
\indent \newnew{There exists a plethora of attacks against vertex 
anonymity based mechanisms \cite{Narayaran:S&P09,srivatsa:CCS12,nilizadeh:CCS14,ji:CCS14}. Ji et al.~\cite{Ji:Usenix15} recently showed that no single vertex anonymization technique was able to resist all the existing attacks.
Note that these attacks are not applicable to link privacy schemes. Therefore, a sound approach to vertex anonymity must start with improvements in our understanding of link privacy. {When used as first step} in the design of vertex privacy mechanisms, our approach can protect the privacy of social contacts and graph links even when the vertices are de-anonymized using state-of-the-art approaches \cite{Narayaran:S&P09,srivatsa:CCS12,nilizadeh:CCS14,ji:CCS14}. {Furthermore, 
our method can even improve the resilience of vertex anonymity mechanisms against de-anonymization attacks when applied 
to \emph{unlabelled} graphs (will be shown in Section~\ref{vertex})}}. \\
{
\begin{figure}[!t]
\renewcommand{\captionfont}{\footnotesize}
\centering
\includegraphics[width=3.3in,height=1.2in]{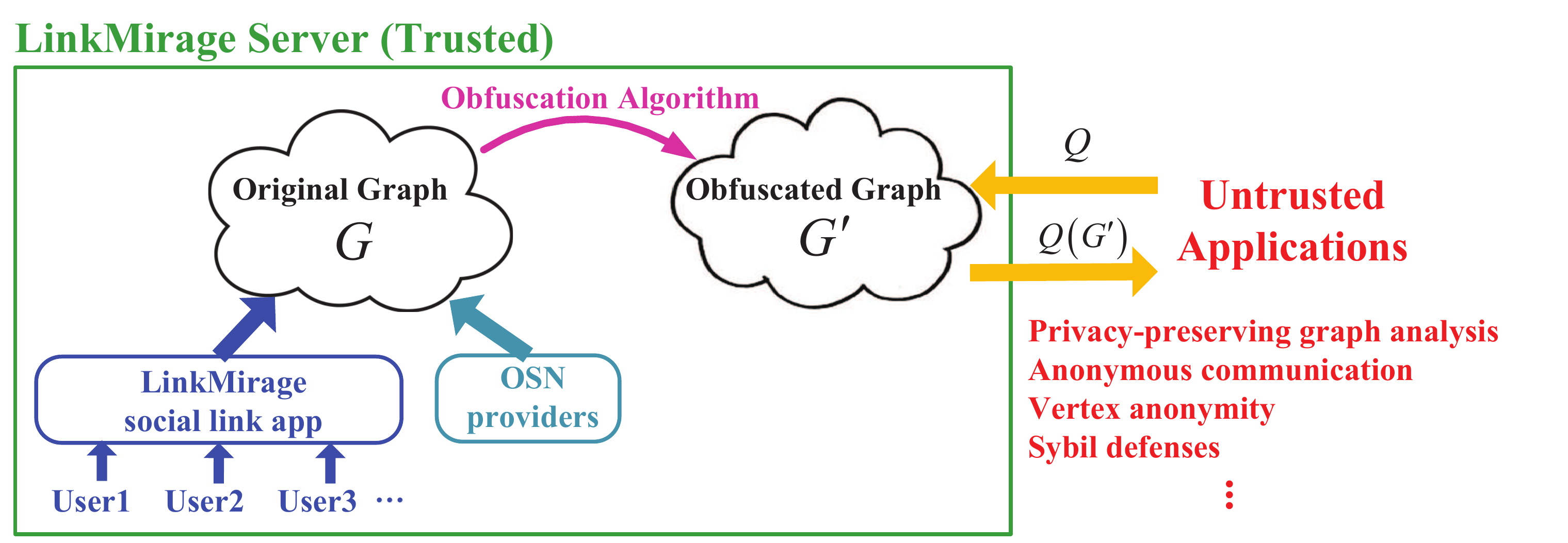}
\caption{{\system{} architecture. \system{} first collects social link information through our social link app or directly 
through the OSN providers, and then applies an obfuscation algorithm to perturb the 
original social graph(s). The obfuscated graph(s) would be utilized to answer the query 
of the untrusted applications in a privacy-preserving manner. The third-party application 
(which queries the social link information) is considered an adversary which aims to obtain 
sensitive link information from the perturbed query results. 
}}
\label{archi}
\end{figure}
\vspace{-1em}
}
\subsection{{System Architecture and Threat Model}}
\label{section_threat}
{Fig.~\ref{archi} shows the overall architecture for \system{}.
For link privacy, we consider the third-party applications (which can query the social link information) 
as adversaries, which aim to obtain sensitive link information from the perturbed query results. 
A sophisticated adversary may have access to certain prior information such as partial link information 
of the original social networks, and such prior information can be extracted from publicly available sources,  
social networks such as Facebook, or other application-related sources as stated in~\cite{burattin:arxiv14}. 
The adversary may leverage Bayesian inference to infer the probability for the existence of a link.
We assume that \system{} itself is trusted, in addition to the social network providers/users who 
provide the input social graph.} \\
\indent In Section~\ref{antiinfer}, \ref{indis}, we define our {Bayesian} privacy metric (called \emph{anti-inference privacy}) and 
an information theoretic metric (called \emph{indistinguishability}) to characterize the privacy 
offered by \system{} 
against adversaries with prior information. In addition, the evolving 
social topologies introduce another serious threat where sophisticated adversaries
can combine information available in multiple {query results} to
infer users' social relationships. We define \emph{anti-aggregation privacy} in Section~\ref{antiaggre}, 
for evaluating the privacy performance of \system{} 
against such adversaries.
\subsection{Basic Theory}
Let us denote a time series of social graphs as $G_0, \cdots, G_T$.
For each temporal graph $G_t=(V_t, E_t)$, the set of vertices is $V_t$ and the
set of edges is $E_t$. For our theoretical analysis, we focus on undirected graphs where all
the $|E_t|$ edges are symmetric, i.e. $(i,j)\in E_t$ iff $(j,i)\in E_t$.
{Note that our approach can be generalized to directed graphs with asymmetric edges.}
$P_t$ is the transition probability matrix of the Markov chain on the vertices of $G_t$. $P_t$ measures
the probability that we follow an edge from one vertex to another vertex,
where $P_t\left(i,j\right)=1/{\deg(i)}$ ($\deg(i)$ denotes the degree of vertex $i$) if $(i,j)\in E_t$,
otherwise $P_t\left(i,j\right)=0$. \new{A random walk starting from vertex $v$, selects a neighbor of $v$ at random according to $P_t$ and repeats the process.} 
\subsection{\final{System Overview and Roadmap}}
\final{
Our objective for \system{} is to obfuscate social relationships while balancing
privacy for users' social relationships and the usability for large-scale real-world applications (as will be stated in Section~\ref{design_goals}). We deploy \system{} as a Facebook application that implements graph construction and obfuscation (as will be discussed in Section~\ref{deploy}). We then describe the perturbation mechanism of \system{} in Section~\ref{sec:perturb} where we take both the static and the temporal social network topology into consideration. Our perturbation mechanism consists of two steps: dynamic clustering which finds
community structures in evolving graphs by simultaneously considering consecutive graphs, and selective perturbation which perturbs the minimal amount of edges in the evolving graphs. Therefore, it is possible to use a very high privacy parameter in the perturbation process, while preserving structural properties such as community structures. We then discuss the scalability of our algorithm in Section~\ref{scalable} and visually show the effectiveness of our algorithm in Section~\ref{Facebook}. In Section~\ref{privacy}, we rigorously analyze the privacy advantage of our \system{} over the state-of-the-art work, through considering three adversarial scenarios including the worst-case Bayesian adversary. In Section~\ref{app}, we apply our algorithm on various real world applications of anonymity systems, Sybil defenses and privacy-preserving analytics. In Section~\ref{utilitysec}, we further analyze the effectiveness of \system{} on preserving different kinds of graph structural performance.}
\vspace{-1em}
\section{LinkMirage System}\label{section_protocol}
\subsection{Design Goals}\label{design_goals}
We envision that applications relying on social relationships between
users can bootstrap this information from online social network
operators such as Facebook, Google+, Twitter
with access to the users' social relationships.
To enable these applications in a privacy-preserving manner, a perturbed 
social graph topology (by adding noise to the original graph topology) should be available.\\
\indent Social graphs evolve over time, and the third-party applications would
benefit from access to the most current version of the graph.
A baseline approach is to perturb each graph snapshot independently.
However, the sequence of perturbed graphs provide significantly
more observations to an adversary than just a single
perturbed graph. We argue that an effective perturbation method
should consider the evolution of the original graph
sequence. Therefore, we have the overall design goals for our system as:
\begin{enumerate}
\item{We aim to obfuscate social relationships while balancing privacy for users' social relationships and the usability for real-world applications.}
\item{We aim to handle both the static and dynamic social network topologies.}
\item{Our system should provide rigorous privacy guarantees to defend against adversaries who have prior information of the original graphs, and adversaries who can combine multiple released graphs to infer more information.}
\item{Our method should be scalable to be applied in real-world large-scale social graphs.}
\end{enumerate}
\begin{figure}[!t]
\renewcommand{\captionfont}{\footnotesize}
\centering
\includegraphics[width=2.8in,height=1.6in]{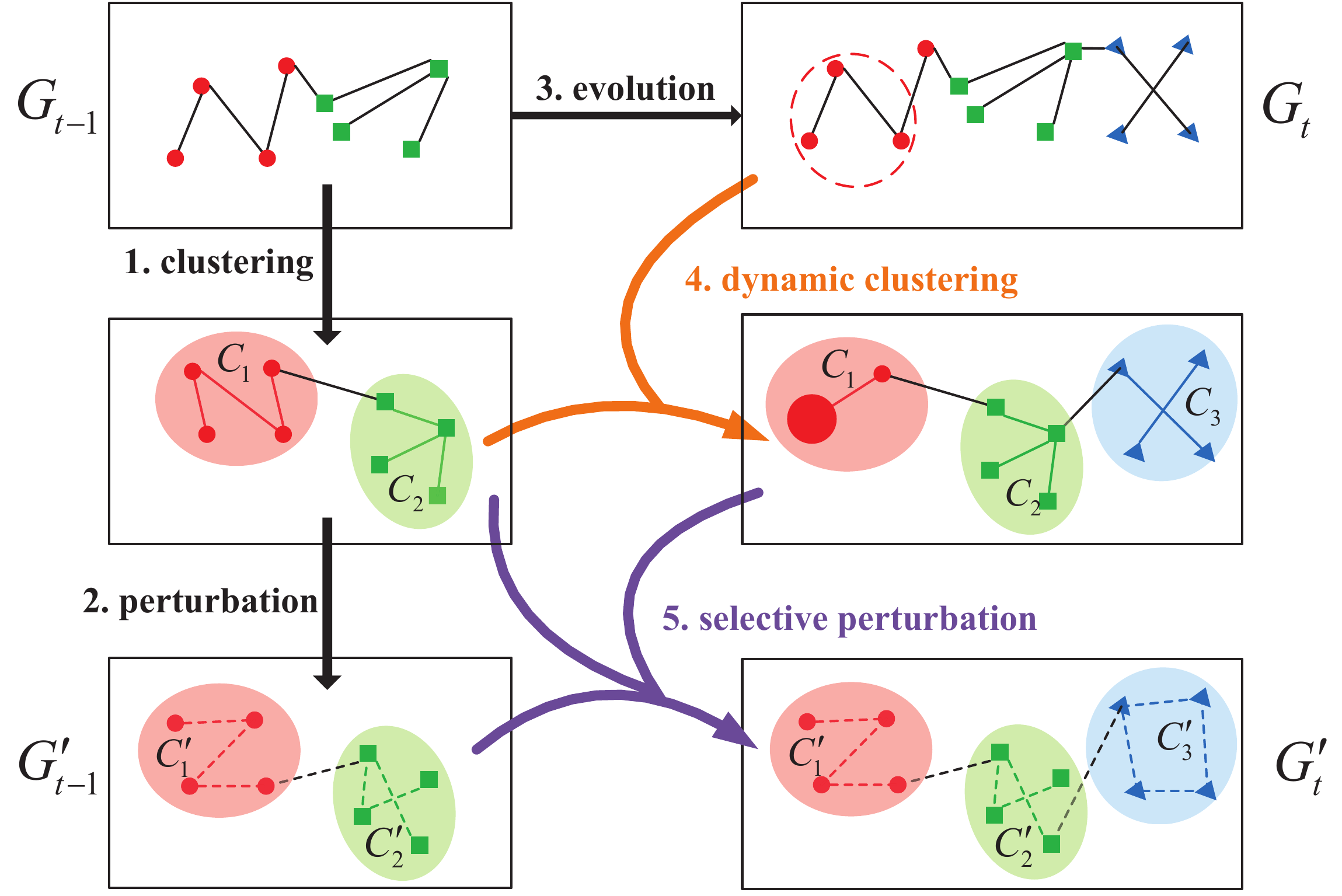}
\caption{\footnotesize{Our perturbation mechanism for $G_t$.
Assume that $G_{t-1}$ has already been dynamically
obfuscated, based on dynamic clustering (step 1) and
selective perturbation (step 2). Our mechanism
analyzes the evolved graph $G_t$ (step 3)
and dynamically clusters $G_t$ (step 4) based on the freed $m$ hop neighborhood ($m=2$)
of new links (between green and blue nodes), the merging virtual node (the large red node
in step 4), and the new nodes. By comparing the communities
in $G_{t-1}$ and $G_t$, we can implement selective
perturbation (step 5), i.e. perturb the changed blue community
independently and perturb the unchanged red and green communities in the
same way as $G_{t-1}^{\prime}$, {and then perturb the inter-cluster links.}}}
\label{mechanism}
\end{figure}
\vspace{-1em}
\subsection{LinkMirage: Deployment}\label{deploy}
To improve the usability of our proposed obfuscation approach (which will be described in detail in Section~\ref{sec:perturb}),
and to avoid dependance on the OSN providers,
we developed a Facebook application \new{(available:
\href{https://apps.facebook.com/xxxx/}
{https://apps.facebook.com/xxxx/})\footnote{Anonymized.}}
that implements graph construction (via individual user subscriptions) and obfuscation.
The work flow of the \system{} deployment is as follows:
(i) When a user visits the above URL, Facebook checks
the credentials of the user, asks whether to grant the
\emph{user's friends} permission, and then gets redirected
 to the application hosting server. (ii) The application
server authenticates itself, and then queries Facebook for
the information of the user's friends, and returns their information such as
\emph{user's id}. The list of user's friends
can then be collected by
the application server to construct
a Facebook social graph for the current {timestamp}. Leveraging \system{},
a perturbed graph {for this \emph{timestamp}} would be available which preserves
the link privacy of the users' social relationships.\\
\indent Real-world systems such as Uproxy, Lantern, Kaleidoscope \cite{hodson:Newscientist23}, 
anonymity systems~\cite{Dingledine:USENIX04,nagaraja:pet07,Mittal:piscesNDSS13}, 
Sybil defenses systems \cite{Yu:IEEES&P08,Danezis:NDSS09} can directly benefit from 
our protocol through automatically obtaining the perturbed social relationships. 
Furthermore, our protocol can enable privacy-preserving graph analytics for OSN providers. 
We will give more detailed explanations for supporting applications in Section~\ref{supportapp}.

\subsection{LinkMirage: Perturbation Algorithm}
\label{sec:perturb}
Social networks evolve with time and publishing a time series of perturbed graphs raises a
serious privacy challenge: an adversary can combine information
available from multiple perturbed graphs over time to compromise
the privacy of users' social contacts \cite{tai:ICDE11,ding:globalcom11,Bhagat:WWW10}. In \system{}, we take a time series of graph topologies into consideration, to account for the evolution of the social networks. Intuitively, the scenario with a static graph topology is just a special situation
of the temporal graph sequence, and is thus inherently incorporated in our model. \\
\indent Consider a social graph series $G_0=\left(V_0, E_0\right)$,$\cdots$,$G_T=\left(V_T, E_T\right)$.
We want to transform the graph series to $G_0^{\prime}=\left(V_0, E_0^{\prime}\right)$,$\cdots$,$G_T^{\prime}=\left(V_T, E_T^{\prime}\right)$,
such that the vertices in $G_t^{\prime}$ remain the same as in the original graph $G_t$,
but the edges are perturbed to protect link privacy. Moreover, while perturbing
the current graph $G_t$, \system{} has access to the past graphs in the
time series (i.e., $G_0, \cdots, G_{t-1}$). Our perturbation goal is to balance the utility 
of social graph topologies and the privacy of users' social contacts,
across time. 
\begin{table*}[!t]
\normalsize
\centering
\caption{{Temporal Statistics of the Facebook Dataset.}}
\begin{tabular}{c|c|c|c|c|c|c|c|c|c}
\hline
{Time} & {0} & {1} & {2} & {3} & {4} & {5}& {6}& {7} &{8} \\
\hline
{\# of nodes} & {9,586} & {9,719} & {11,649} & {13,848} & {14,210} & {16,344} & {18,974} & {26,220} & {35,048} \\
\hline
\# of edges & 48,966 & 38,058 & 47,024 & 54,787 & 49,744 & 58,099 &65,604 & 97,095&  142,274 \\
\hline
{Average degree} & {5.11} & {3.91} & {4.03} & {3.96} & {3.50} & {3.55} & {3.46} & {3.70} & {4.06}\\
\hline
\end{tabular}
\label{tablefb}
\end{table*}\\
\indent {\it Approach Overview:} Our perturbation mechanism for
\system{} is illustrated in Fig.~\ref{mechanism}.\\
\indent {{\bf{Static scenario:}} For a static graph $G_{t-1}$, we first cluster it into several communities, and then perturb the links within each community. The inter-cluster links are also perturbed to protect their privacy.}\\
\indent {{\bf{Dynamic scenario:}}} Let us suppose that $G_t$ evolves from $G_{t-1}$
by addition of new vertices (shown in blue color).
To perturb graph $G_t$, our intuition is to consider
the similarity between graphs $G_{t-1}$ and $G_t$.\\
\indent First, we partition $G_{t-1}$ and $G_t$ into subgraphs,
by clustering each graph into different communities.
To avoid randomness (guarantee consistency) in the clustering procedure and to
reduce the computation complexity, we dynamically cluster
the two graphs \emph{together} instead of clustering
them independently. Noting that one green node evolves by
connecting with a new blue node, we free \new{\footnote{We free the nodes from the previously clustering hierarchy.}} all the nodes located
within $m=2$ hops of this green node (the other two green nodes and one red node) and merge the remaining three red nodes to a big virtual node. Then, we cluster these new nodes, the freed nodes and the remaining virtual node
to detect communities in $G_t$.\\
\indent Next, we compare the communities within $G_{t-1}$ and $G_t$,
and identify the \textit{changed} and \textit{unchanged} subgraphs.
For the \textit{unchanged} subgraphs $C_1, C_2$, we set
their perturbation at time $t$ to be identical to their
perturbation at time $t-1$, denoted by
$C_1^{\prime}, C_2^{\prime}$.
For the \textit{changed} subgraph $C_3$, we perturb it
independently to obtain $C_3^{\prime}$. {We also perturb the links between communities to protect privacy of these inter-cluster links.}
Finally, we publish
 $G_t^\prime$ as the combination of $C_1^{\prime}, C_2^{\prime}, C_3^{\prime}$ and the perturbed inter-cluster links.
There are two key steps in our algorithm: dynamic clustering
and selective perturbation, which we describe in
detail as follows.
\subsubsection{{Dynamic Clustering}}
Considering that communities in social networks change significantly over time, we need to address the inconsistency problem by developing a dynamic community detection method.
Dynamic clustering aims to find community structures
in evolving graphs by simultaneously considering
consecutive graphs in its clustering algorithms.
There are several methods in the literature to
cluster evolving graphs \cite{Aynaud:Springer13},
but we found them to
be unsuitable for use in our perturbation mechanism.
One approach to dynamic clustering involves performing
community detection at each timestamp independently, and 
then establishing relationships between communities to track
their evolution \cite{Aynaud:Springer13}.
We found that this approach suffers from performance issues
 induced by inherent randomness in clustering algorithms, in
addition to the increased computational complexity.
\begin{algorithm}[!t]
\renewcommand{\algorithmicrequire}{\textbf{Input:}}
\renewcommand\algorithmicensure {\textbf{Output:} }
\begin{algorithmic}[1]
{
\REQUIRE~~{$\{G_{t}, G_{t-1}, G_{t-1}^{\prime}\}$ if $t\ge1$ or $\{G_t\}$ if $t=0$;}\\
\ENSURE  {$G_t^\prime$};\\
       {
       $G_t^{\prime},C_t=$null;\\
       \textbf{if} t=0;\\
       $~~~$cluster $G_0$ to get $C_0$;\\
       $~~~$label $C_0$ as changed, i.e. $C_{0\mathrm{-ch}}=C_0$;\\
       \new{\textbf{endif}} \\
       /*{\textit{\newnew{Begin Dynamic Clustering}}}*/\\
       1. free the nodes within $m$ hops of the changed links;\\
       2. re-cluster the new nodes, the freed nodes, the remai-\\
       $~~~$-ning merged virtual nodes in $C_{(t-1)}$ to get $C_t$;\\
       /*{\textit{\newnew{End Dynamic Clustering}}}*/\\
       \noindent/*{\textit{\newnew{Begin Selective Perturbation}}}*/\\
       3. find the unchanged communities $C_{t\mathrm{-un}}$ and the chan-\\
       $~~~$-ged communities $C_{t\mathrm{-ch}}$; \\
       4. let $G_{t\mathrm{-un}}^{\prime}=G_{(t-1)\mathrm{-un}}^{\prime}$;\\
       5. perturb $C_{t\mathrm{-ch}}$ for $G_{t\mathrm{-ch}}^{\prime}$ by the static method; \\
       6. \textbf{foreach} community pair $a$ and $b$;\\
       $~~~~~$\textbf{if} both of the communities belong to $C_{t\mathrm{-un}}$\\
       $~~~~~~~~~C_{t\mathrm{-in}}^{\prime}(a,b)=C_{(t-1)\mathrm{-in}}^{\prime}(a,b)$;\\
       $~~~~~$\textbf{else}\\
       $~~~~~~~$\textbf{foreach} marginal node $v_a(i)$ in $a$ and $v_b(j)$ in $b$\\
       $~~~~~~~~~~$randomly add an edge $(v_a(i), v_b(j))$ with pro-\\
       $~~~~~~~~~~$-bability $\frac{\deg(v_a(i))\deg(v_b(j))|v_a|}{|E_{ab}|(|v_a|+|v_b|)}$ to $G_{t\mathrm{-in}}^{\prime}(a,b)$;\\
       /*{\textit{\newnew{End Selective Perturbation}}}*/\\
       return $G_t^{\prime}=[G_{t\mathrm{-ch}}^{\prime}, G_{t\mathrm{-un}}^{\prime}, G_{t\mathrm{-in}}^{\prime}]$;\\
       }
       }
\caption{{\system{}, with dynamic clustering (steps 1-2) and selective perturbation (steps 3-6). The parameter $k$ denotes the perturbation level for each community. 
Here, {\emph{\small{ch, un, in}}} are short for \emph{changed, unchanged, inter-community}, respectively.}}
\label{dynamic_algorithm}
\end{algorithmic}
\end{algorithm}
Another approach is 
to combine multiple graphs into a single coupled graph \cite{Aynaud:Springer13}.
The coupled graph is constructed by adding edges between
the same nodes across different graphs. Clustering can be
performed on the single coupled graph.
We found that the clustering performance is very sensitive to
the weights of the added links, resulting in unstable clustering
results.
Furthermore, the large dimensionality of the coupled
graph significantly increases the computational
overhead.\\
\indent For our perturbation mechanism, we develop an adaptive dynamic clustering
approach for clustering the graph $G_t$ using the clustering result for
the previous graph $G_{t-1}$.
This enables our perturbation mechanism to
(a) exploit the link correlation/similarity in consecutive graph snapshots,
and (b) reduce computation complexity by avoiding repeated
clustering for unchanged links.\\
\indent \final{Clustering the graph $G_t$ from the clustering result of
the previous graph $G_{t-1}$ requires a \emph{backtracking} strategy.
We use the maximum-modularity method \cite{newman:PNAS06}
for clustering, which is hierarchical and thus easy to backtrack.
Our backtrack strategy is to first maintain a history of the merge
operations that led to the current clustering. When an evolution occurs,
the algorithm backtracks over the history of merge operations, in order
to incorporate the new additions and deletions in the graph.\\
\indent More concretely, if the link between node $x$ and node $y$ is changed
(added or deleted), we omit all the $m$-hop neighborhoods of $x$ and $y$ as well as $x$ and $y$
themselves from the clustering result of the previous timestamp, and then perform
re-clustering. All the new nodes, the changed nodes and their $m$-hop neighbors, and the remaining
merged nodes in the previous clustering result would be considered as basic elements for clustering $G_t$ (recall Figure~\ref{mechanism}).}\\
\indent For efficient implementation, we store the intermediate results of
the hierarchical clustering process in a data structure. Upon
link changes between $x,y$, we free the $m$-hop neighborhood of $x,y$ from the stored data structure.
\begin{figure*}[!t]
\renewcommand{\captionfont}{\footnotesize}
\centering
\includegraphics[width=6in,height=1.5in]{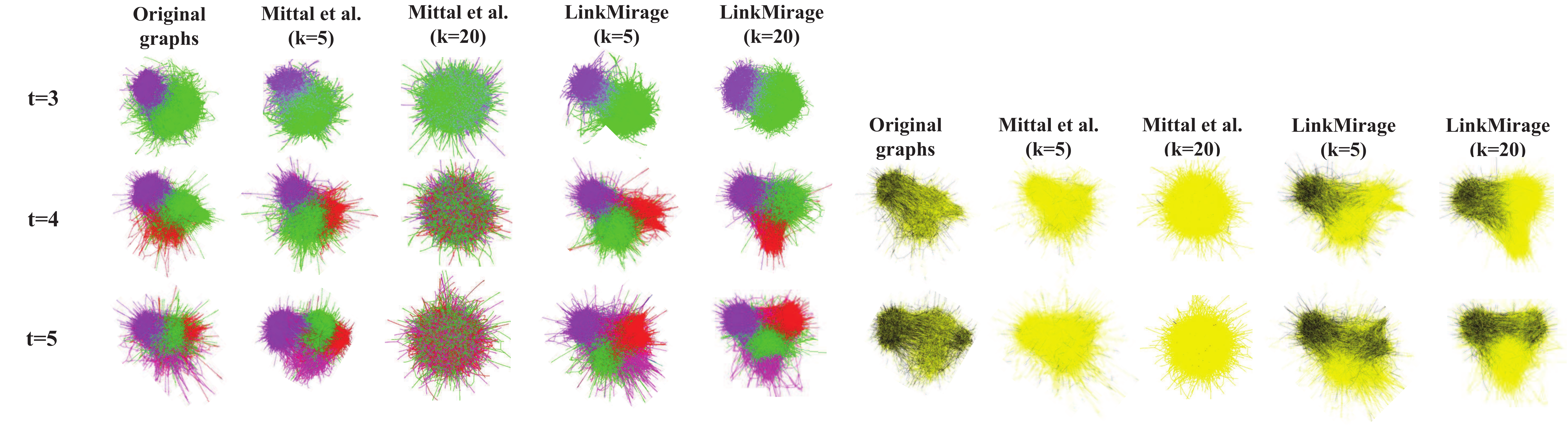}
\vspace{-1em}
\caption{Dynamic Facebook interaction dataset topology, {for 
$t=3, 4,5$. 
On the left, we can see that \system{} has  superior utility than
the baseline approach (Mittal et al.), especially for larger values of 
$k$ (due to dynamic clustering). On the right, we show the overlapped edges (black)
and the changed edges (yellow) between consecutive graphs: t=($3,4$) and t=($4,5$). 
We can see that in \system{}, the perturbation of unchanged communities is correlated across time (selective perturbation), minimizing 
information leakage and enhancing privacy. 
}}
\label{visualutility} 
\end{figure*}
\subsubsection{{Selective perturbation}}
\noindent{~~~~~~\\ \bf{Intra-cluster Perturbation:}} After clustering $G_t$ based on $G_{t-1}$ using our dynamic
clustering method, we perturb $G_t$ based on $G_{t-1}$
and the perturbed $G_{t-1}^{\prime}$. First, we compare the
communities detected in $G_{t-1}$ and $G_t$, and classify
them as \textit{changed} or \textit{unchanged}.
Our unchanged classification does not require that the
communities are exactly the same, but that the overlap
among vertices/links exceeds a threshold.
Our key idea is to keep the perturbation process for
links in the \textit{unchanged} communities to be identical to their
perturbation in the previous snapshot.
In this manner, we can preserve the privacy of these unchanged
links to the largest extent; it is easy to see that
alternate approaches would leak more information.
For the communities which are classified as changed, our approach
is to perturb their links independently of the perturbation in
the previous timestamp. 
For independent perturbations, we leverage the
static perturbation method of Mittal et al. in \cite{Mittal:NDSS13}. \new{Their static perturbation deletes all the edges in the original graph, and replaces each edge $(v,u)$ with a fake edge $(v,w)$ selected from the $k$-hop random walk starting from $v$. Larger perturbation parameter $k$ corresponds to better privacy and leads to worse utility}.\\
{\bf{Inter-cluster Perturbation:}} Finally, we need to interconnect the subgraphs 
identified above. Suppose that $|v_a|$ nodes and $|v_b|$ nodes are connecting
communities $a$ and $b$ respectively, and they construct an
inter-community subgraph. For each marginal
node $v_a(i)\in v_a$ and $v_b(j)\in v_b$ (here the marginal node in community
$a$ (resp.$b$) refers to the node that has neighbors in the other community $b$ (resp.$a$))
, we randomly connect them with probability
$\frac{\deg(v_a(i))\deg(v_b(j))|v_a|}{|E_{ab}|(|v_a|+|v_b|)}$.\footnote{This probability is set for the preservation of degree distributions as analyzed in Section~\ref{utilitysec}.} Here, all the computations for
$\deg(\cdot), |v_{\cdot}(\cdot)|, |E_{\cdot}|$ only consider the marginal nodes.
We can combine the perturbed links corresponding to
the unchanged communities, changed communities, and
inter-community subgraphs, to compute the output of our
algorithm, i.e., $G_t^{\prime}$. \\
\indent \system{} not only preserves the structural
characteristics of the original graph series,
but also protects the privacy of the users
by randomizing the original links. As compared
to prior work, our method provides stronger
privacy and utility guarantees for evolving
graphs. Detailed
procedures are stated in Algorithm.~\ref{dynamic_algorithm}.\\
\indent Surprisingly, our approach of first
isolating communities and then selectively perturbing
them provides benefits even in a static context!
This is because previous static approaches use
a single parameter to control the privacy/utility
trade-off. Thus, if we apply them to the whole graph
using high privacy parameters, it would destroy
graph utility (e.g. community structures).
On the other hand, \system{} applies perturbations
selectively to communities; thus it is possible to
use a very high privacy parameter in the perturbation
process, while preserving structural properties such as community structures.
\subsection{{Scalable Implementation}}\label{scalable}
Our algorithm relies on two key graph theoretical techniques: \emph{community detection} (serves as a foundation for the dynamic clustering step in \system{}) and \emph{random walk} (serves as a foundation for the selective perturbation step in \system{}). \new{The computational complexity for both \emph{community detection} and \emph{random walk} is $O(|E_t|)$~\cite{Aynaud:Springer13,Mittal:NDSS13} where $|E_t|$ is the number of edges in graph $G_t$, therefore the overall computational complexity of our approach is $O(|E_t|)$.} Furthermore, our algorithms are parallelizable. We adopt the GraphChi parallel framework in \cite{kyrola:OSDI12} to implement our algorithm efficiently using a commodity workstation (3.6 GHz, 24GB RAM). \new{Our parallel implementation scales to very large social networks; for example, the running time of \system{} is less than $100$ seconds for the large scale Google+ dataset (940 million links) (will be described in Section~\ref{dataset}) using our commodity workstation.}
\subsection{{Visual Depiction}}\label{Facebook}
For our experiments, we consider a real world
 \textit{Facebook social network} dataset
\cite{Viswanath:ACM_osn09} among \textit{New Orleans regional network}, spanning from September 2006 to January 2009.
Here, we utilize the wall post interaction data which
represents stronger trust relationships and comprises of 46,952 nodes (users) connected
by 876,993 edges.
We partitioned the dataset using three month intervals
to construct a total of 9 graph instances as shown
in Table~\ref{tablefb}. 
Fig.~\ref{visualutility} depicts the outcome of our
perturbation algorithm on the partitioned Facebook graph
sequence with timestamp $t=3,4,5$ (out of 9 snapshots),
 for varying perturbation parameter $k$ (perturbation parameter for
each community). For comparative analysis,
we consider a baseline approach~\cite{Mittal:NDSS13} that applies static
perturbation for each timestamp independently.
{In the dynamic clustering step of our experiments, we free the
two-hop neighborhoods of the changed nodes,  i.e. $m=2$.}\\
\indent The maximum-modularity clustering method yields two communities
for $G_3$, three communities for $G_4$, and four communities
for $G_5$. For the perturbed graphs, we use the same color
for the vertices as in the original graph and 
we can see that fine-grained structures (related to utility)
are preserved  for both algorithms under small
perturbation parameter $k$, even though links are
randomized. Even for high values of $k$, \system{} can preserve the macro-level (such as community-level)
structural characteristics of the graph.
On the other hand,
for high values of $k$, the static perturbation algorithm
results in the loss of structure properties, and appears to
resemble a random graph. Thus, our approach of first isolating
communities and applying perturbation at the level of
communities has benefits even in a static context.\\
\indent Fig.~\ref{visualutility} also shows the privacy benefits of our
perturbation algorithm for timestamps $t=4,5$. We can see
that \system{} reuses perturbed links (shown as black unchanged links)
in the unchanged communities
(one unchanged community for $t=4$ and two unchanged communities for $t=5$).
Therefore, \system{} preserves the privacy of
users' social relationships by considering correlations
among the
graph sequence, and this benefit does not come at the cost
of utility. In the following sections, we will formally
quantify the privacy and utility properties of \system{}.
\begin{table*}[!t]\small
\newcommand{\tabincell}[2]{\begin{tabular}{@{}#1@{}}#2\end{tabular}}
\renewcommand{\tabcolsep}{4pt}
\centering
\caption{{Temporal Statistics of the Google+ Dataset.}}
\begin{tabular}{c|c|c|c|c|c|c|c|c}
\hline
{Time} 
& {Jul.29} & {Aug.8} & {Aug.18} & {Aug.28} & {Sep.7} & {Sep.17} & {Sep.27} & {Oct.7}\\
\hline
{\# of nodes} 
 & {16,165,781} & {17,483,936} &{17,850,948} & {19,406,327} &{19,954,197} &{24,235,387} &{28,035,472} &{28,942,911}\\
\hline
{\# of edges} 
& {505,527,124} & {560,576,194} & {575,345,552} & {654,523,658} & {686,709,660} & {759,226,300} & {886,082,314} & {947,776,172}\\
\hline
{Average degree} 
& {31.2714} & {32.0624} & {32.2305} & {33.7273} & {34.4143} & {31.3272} & {31.6058} & {32.7464}\\
\hline
\end{tabular}
\label{tablegoogle}
\end{table*}
\vspace{-1em}
\subsection{{Supporting Applications}}\label{supportapp}
\vspace{-0.5em}
\newnewnew{As discussed in Section~\ref{back_app}, LinkMirage supports three types 
of applications:} 1) Global access to obfuscated graphs: real-world 
applications can utilize our protocol to automatically obtain the secure social graphs to enable social relationships based systems. For instance, Tor operators~\cite{Dingledine:USENIX04} (or other anonymous communication network such as Pisces in \cite{Mittal:piscesNDSS13}) can leverage the perturbed social relationships to set up the anonymous circuit; 2) Local access to the obfuscated graphs: an individual user can query our protocol for his/her perturbed friends (local neighborhood information), to implement distributed applications such as SybilLimit in \cite{Yu:IEEES&P08}; 3) Mediated data analysis: the OSN providers can also publish perturbed graphs by leveraging \system{} to facilitate privacy-preserving data-mining research, i.e., to implement graph analytics such as pagerank score~\cite{page:stanford99}, modularity~\cite{newman:PNAS06}, while mitigating disclosure of users' social relationships. Existing work in \cite{dwork:Springer06,dwork:ACM09} demonstrated that the implementation of graph analytic algorithms would leak certain information. To avoid repeatedly adding perturbations to the output of every graph analytic algorithm, which is rather costly, the OSN providers can first obtain the perturbed graphs by leveraging \system{} and then enable these graph analytics in a privacy-preserving manner. 

\vspace{-1em}
\section{Privacy Analysis}\label{privacy}
{We now address the question of understanding link privacy
of \system{}. 
{We propose three privacy metrics: \emph{anti-inference privacy, indistinguishability, anti-aggregation privacy} 
to evaluate the link privacy provided by \system{}.}  
Both theoretical analysis and experimental results with a
\emph{Facebook} dataset (\facebook{} links) and a
large-scale \emph{Google+} dataset (940M links)
show the benefits of \system{} over
previous approaches.  We also illustrate the relationship
between our privacy metric and differential privacy.}
\subsection{Experimental Datasets}\label{dataset}
To illustrate how the temporal information degrades privacy,
we consider two social network datasets. The first one is a large-scale
Google+ dataset \cite{gong:IMC12}.
whose temporal statistics are illustrated
 in Table~\ref{tablegoogle}. 
To the best of our knowledge,
this is the largest temporal dataset of social networks in public domain.
The Google+ dataset is crawled from July 2011 to October 2011 which has 28,942,911 nodes and 947,776,172 edges. 
The dataset only considers link additions, i.e. all the edges in the
previous graphs exist in the current graph. We partitioned the dataset into
84 timestamps. The second one is
the 9-timestamp Facebook wall posts dataset \cite{Viswanath:ACM_osn09} as we stated in Section~\ref{Facebook}.
with temporal characteristics shown in Table~\ref{tablefb}. {It is worth noting
that the wall-posts data experiences tremendous churn with
only 45\% overlap for consecutive graphs. Since our
dynamic perturbation method relies on the correlation between
consecutive graphs, the evaluation of our dynamic method
on the Facebook wall posts data is conservative. 
To show the improvement in performance of our algorithm for 
graphs that evolve at a slower rate, 
we also consider a sampled graph sequence extracted from the Facebook wall posts data 
with 80\% overlap for consecutive graphs.}
\begin{figure*}[!t]
\renewcommand{\captionfont}{\footnotesize}
\centering{
\label{probabilitywhole} 
\includegraphics[width=2.7in,height=1.3in]{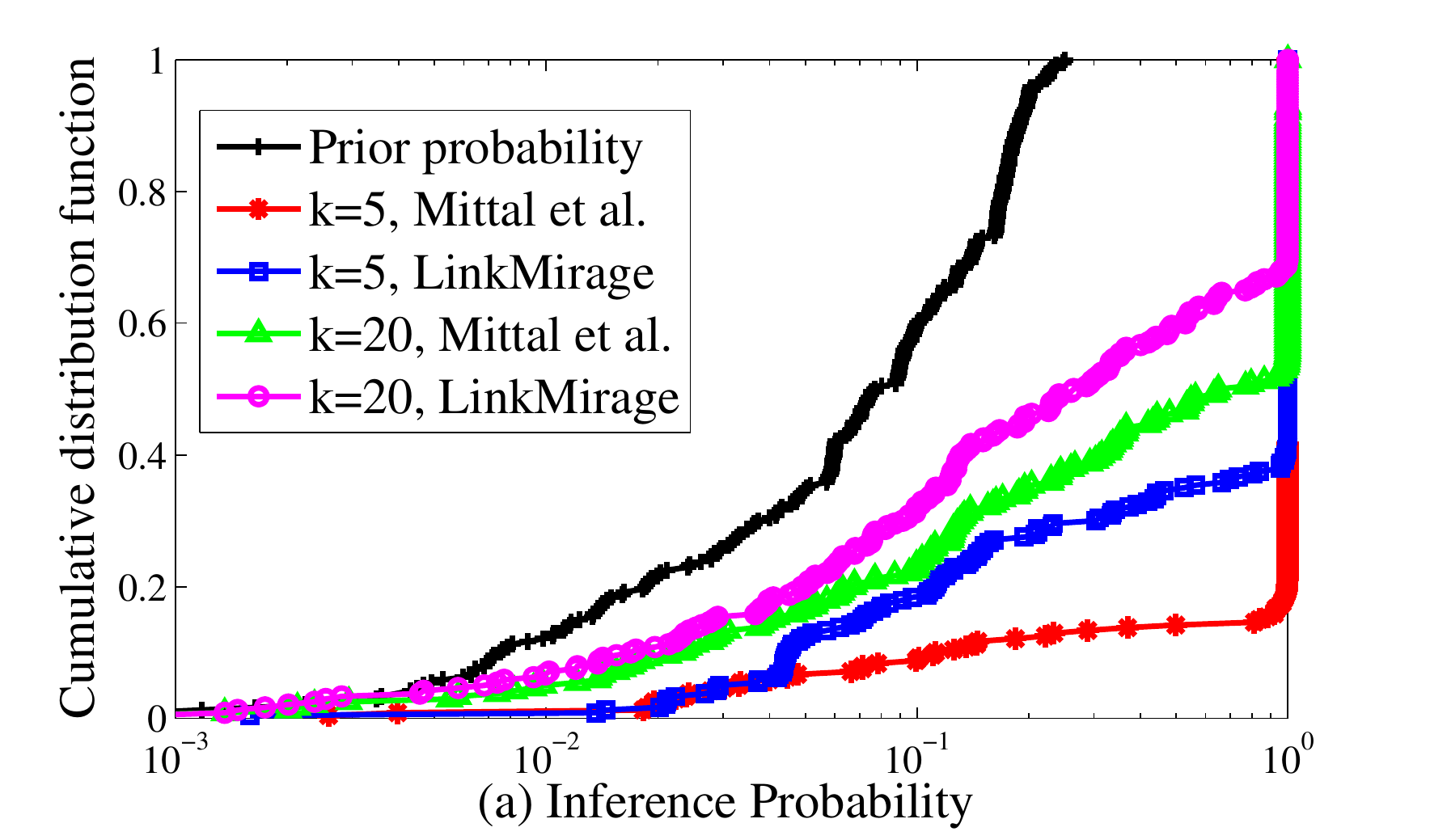}}
\hspace{0.5in}
\centering{
\label{probability80percent} 
\includegraphics[width=2.7in,height=1.3in]{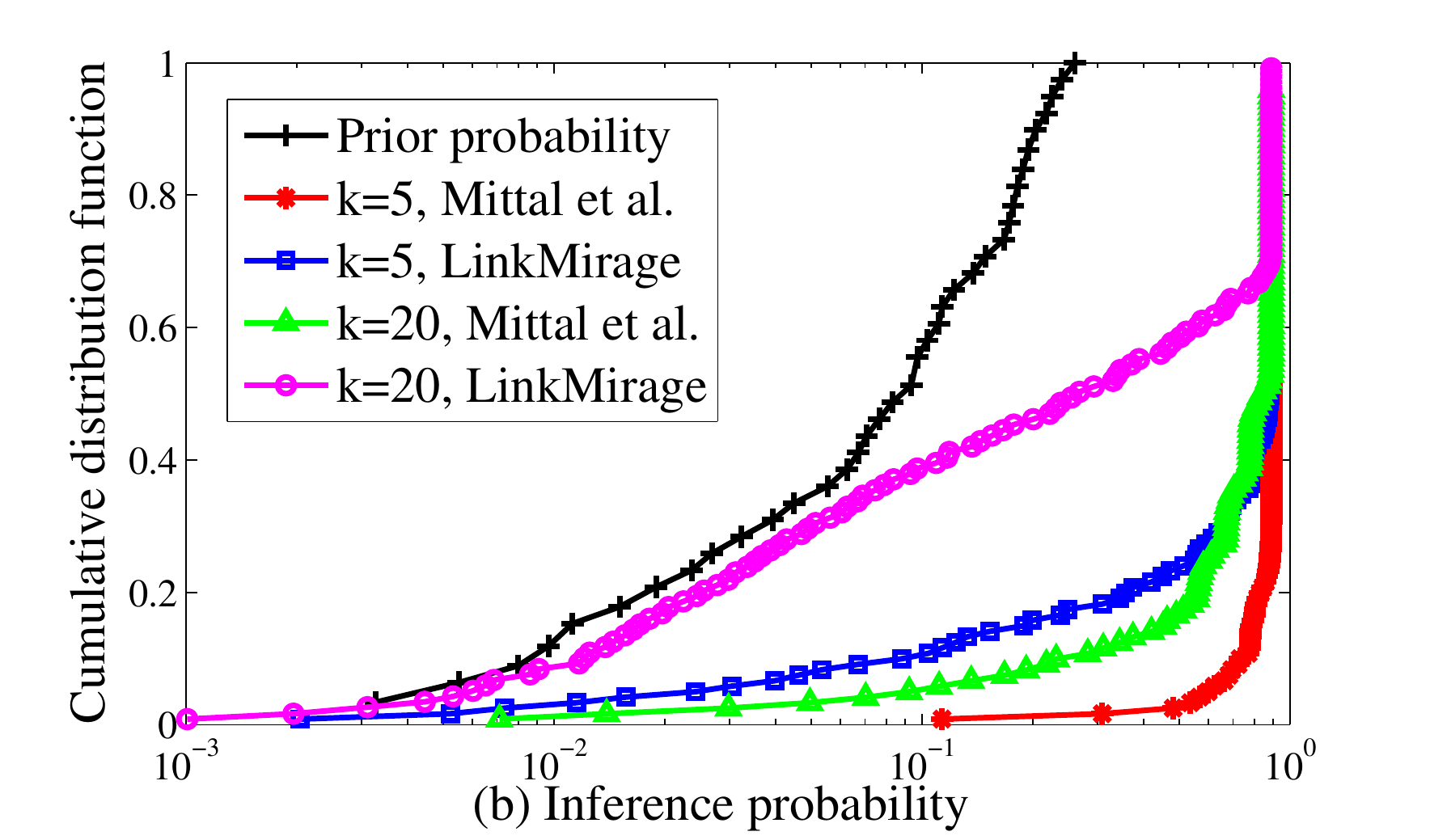}}
\vspace{-0.5em}
\caption{(a),(b) represent the link probability distributions for the
whole Facebook interaction dataset and the sampled
Facebook interaction dataset with 80\% overlap.
We can see that the posterior probability of \system{} is
more similar to the prior probability than the baseline approach. 
}
\label{probability} 
\end{figure*}
\subsection{Anti-Inference Privacy}\label{antiinfer}
First, we consider adversaries that aim to infer link information by leveraging Bayesian inference.
We define the privacy of a link $L_t$ (or a subgraph) in
the $t$-th graph instance, \new{as the difference between the posterior probability and the prior probability of the existence
of the link (or a subgraph), computed by the adversary using
its prior information $W$, and the knowledge of the perturbed
graph sequence $\{{G}^\prime_i\}_{i=0}^t$.} Utilizing Bayesian
inference, we have
\begin{mydef}\label{bayesianprivacy}
\new{For link $L_t$ in the original graph sequence $G_0, \cdots, G_t$ and the adversary's prior information $W$, 
the anti-inference privacy $\mathrm{Privacy}_{{ai}}$ for the perturbed graph sequence $G_0^\prime,\cdots,
G_t^{\prime}$ is evaluated
by the similarity between the posterior probability
$P(L_t|\{{G}^\prime_i\}_{i=0}^t,W)$ and the prior probability $P(L_t|W)$,
where the posterior probability is}
\begin{equation}
P(L_t|\{{G}^\prime_i\}_{i=0}^t,W) = \frac{P(\{{G}^\prime_i\}_{i=0}^t|L_t,W)\times P(L_t|W)}{P(\{{G}^\prime_i\}_{i=0}^t|W)}
\end{equation}
Higher similarity implies better anti-inference privacy. 
\end{mydef}
\indent The difference between the posterior probability and the
prior probability represents the information leaked by the perturbation mechanism. \new{Similar intuition has been mentioned in~\cite{li2013membership}.}
Therefore, the posterior probability should not differ much from the
prior probability. \\
\indent In the above expression, $P(L_t|W)$ is the prior probability of
the link, which can be computed based on the known structural properties
of social networks, for example, by using link prediction algorithms \cite{liben:JASIST07}.
Note that $P(\{{G}^\prime_i\}_{i=0}^t|W)$ is a normalization constant that can
be analyzed by sampling techniques. The key challenge is to
compute $P(\{{G}^\prime_i\}_{i=0}^t|L_t,W)$\footnote{\final{The detailed process for computing the posterior probability can be found in~\cite{Mittal:NDSS13}}}.\\
\indent For evaluation, we consider a special case where the adversary's prior is
the entire time series of original graphs except the
link $L_t$ (which is the link we want to quantify privacy for, and $L_t=1$ denotes
the existence of this link while $L_t=0$ denotes the non-existence of this link). {Such prior information can be extracted from personal public information, Facebook related information or other application-related information as stated in~\cite{burattin:arxiv14}.}
Note that this is a very strong adversarial prior, which would lead
to the worst-case analysis of link privacy. Denoting
$\{\widetilde{G}_{i}(L_t)\}_{i=0}^t$ as the prior which
contains all the information except $L_t$, we have the posterior probability of link $L_t$ under the worst case is
\begin{figure}[!t]
\renewcommand{\captionfont}{\footnotesize}
\centering
\includegraphics[width=2.4in,height=1.4in]{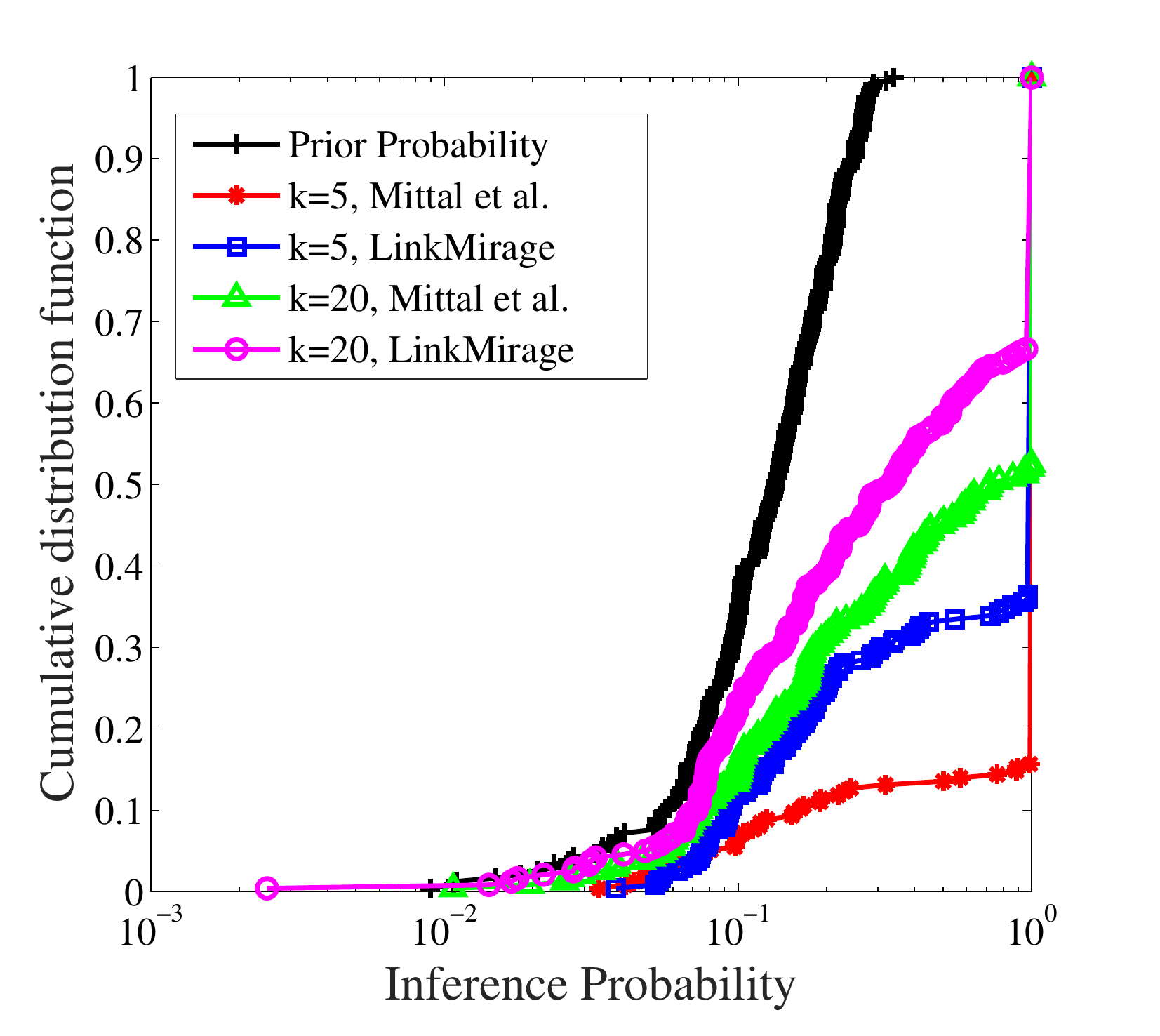}
\DeclareGraphicsExtensions.
\vspace{-0.5em}
\caption{Link probability distribution for the Google+ dataset under the adversary's prior information extracted from the social-attribute network model in \cite{gong:IMC12}.}
\label{context}
\end{figure}
\begin{displaymath}
\begin{aligned}
&P(L_t|\{{G}^\prime_i\}_{i=0}^t,\{\widetilde{G}_i(L_t)\}_{i=0}^t)\\
&=\frac{P(\{{G}^\prime_i\}_{i=0}^t|, L_t,\{\widetilde{G}_i(L_t)\}_{i=0}^t)\times P(L_t|\{\widetilde{G}_i(L_t)\}_{i=0}^t)}{P(\{G^\prime_i\}_{i=0}^t|\{\widetilde{G}_i(L_t)\}_{i=0}^t)}
\end{aligned}
\end{displaymath}
where
\begin{displaymath}
\begin{aligned}
&P(\{{G}^\prime_i\}_{i=0}^t|L_t,\{\widetilde{G}_i(L_t)\}_{i=0}^t)=P(G^\prime_0|\widetilde{G}_0(L_t))\times\\
&P(G^\prime_1|G^\prime_0,\widetilde{G}_0(L_t), \widetilde{G}_1(L_t))\cdots P(G^{\prime}_t|G^\prime_{t-1},\widetilde{G}_{t-1}(L_t), \widetilde{G}_t(L_t))
\end{aligned}
\end{displaymath}
Therefore, the objective of perturbation algorithms is to
make $P(L_t|\{G^\prime_i\}_{i=0}^t,\{\widetilde{G}_i(L_t)\}_{i=0}^t)$
close to $P(L_t|\{\widetilde{G}_i(L_t)\}_{i=0}^t)$.\\
{\bf{Comparison with previous work:}} Fig.~\ref{probability} shows the posterior probability distribution for
the whole Facebook graph sequence and the sampled Facebook graph sequence
with 80\% overlapping ratio, respectively. We computed the prior probability using the link prediction method
in \cite{liben:JASIST07}. We can see that the
posterior probability corresponding to
\system{} is closer to the prior probability than that of the method of Mittal et al.~\cite{Mittal:NDSS13}.
In Fig.~\ref{probability}(b), taking the point where the link
probability equals $0.1$, the distance between the posterior CDF and
the prior CDF for the static approach is a factor of $3$
larger than \system{} ($k=20$).
 Larger perturbation degree $k$ improves privacy
and leads to smaller difference with the prior probability. Finally, by
comparing Fig.~\ref{probability}(a) and (b),
we can see that larger overlap in the graph sequence improves the privacy
benefits of \system{}.
\begin{figure*}[!t]
\renewcommand{\captionfont}{\footnotesize}
\centering{
\label{tempoequiwhole} 
\includegraphics[width=2.7in,height=1.3in]{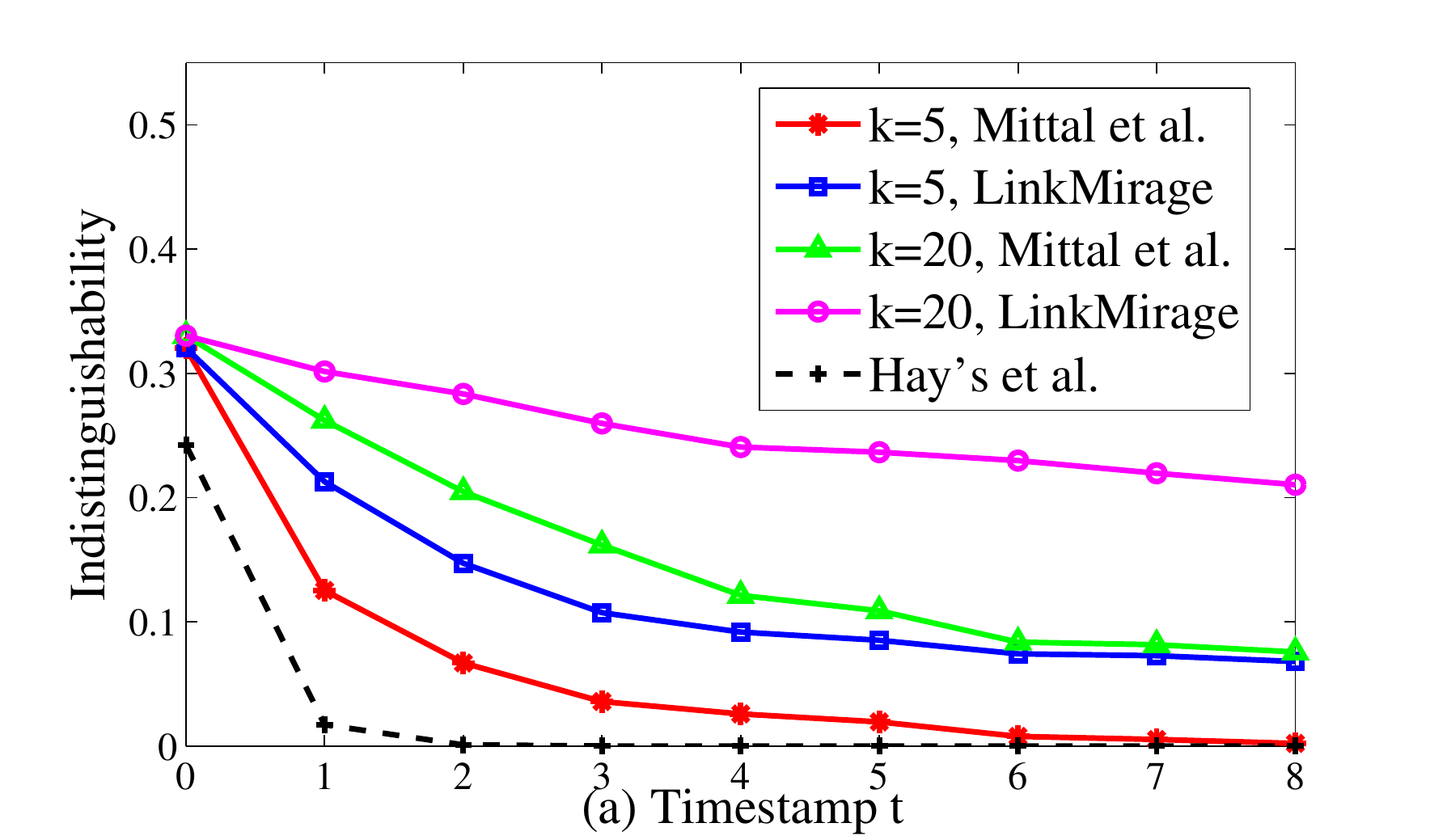}}
\hspace{0.5in}{
\label{tempoequisub} 
\includegraphics[width=2.7in,height=1.3in]{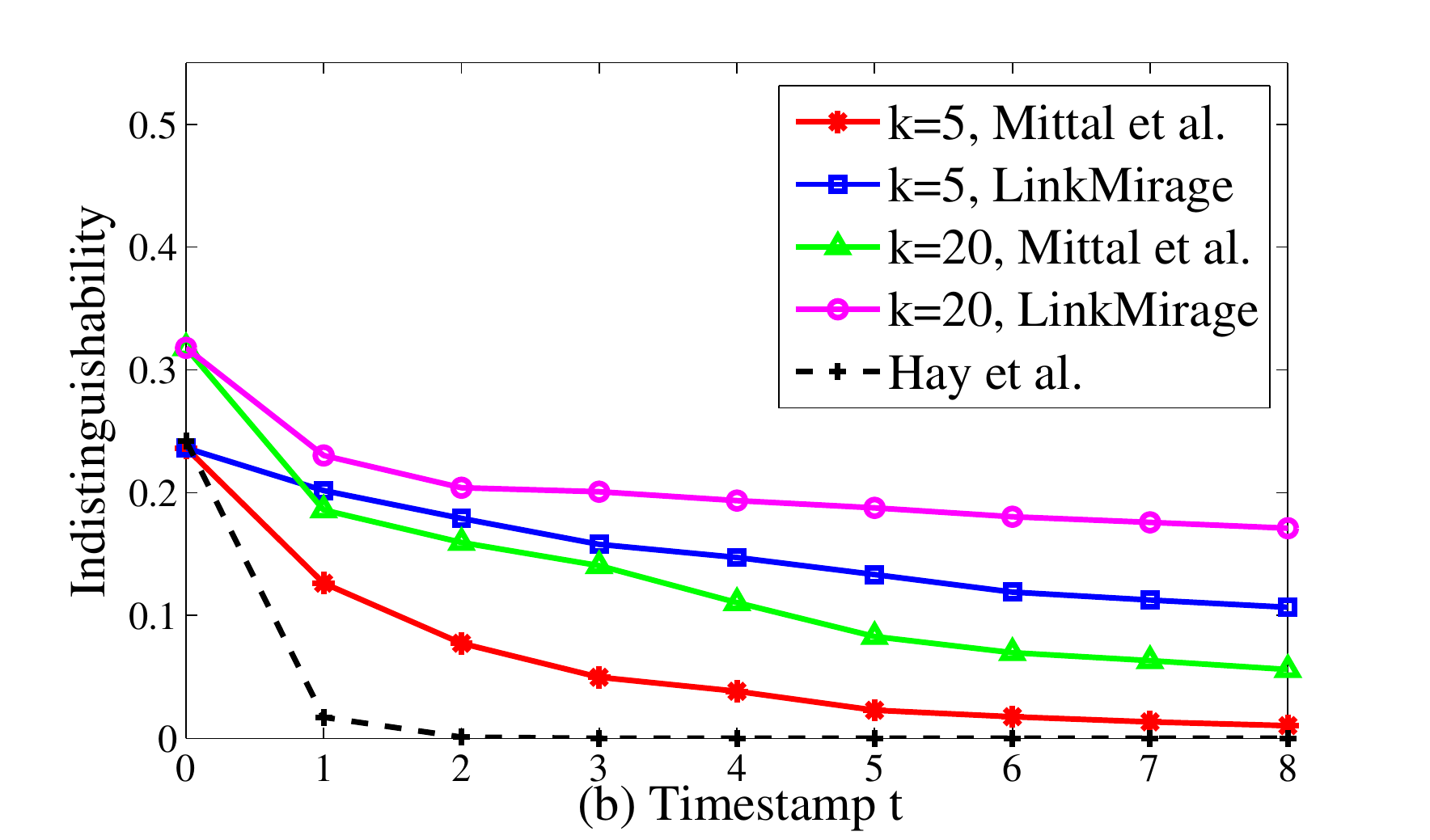}}
\vspace{-0.5em}
\caption{(a),(b) represent the temporal indistinguishability for the whole
Facebook interaction dataset and the sampled Facebook interaction
 dataset with 80\% overlap. 
Over time, the adversary has more information, resulting in decreased indistinguishability.
We can also see that \system{} has higher indistinguishability than the static method
and the Hay's method in \cite{Hay:csfacultypublication07},
although it still suffers from some information leakage.}
\label{tempo_equivocation} 
\end{figure*}
\indent {We also compare with the work of Hay et al. in \cite{Hay:csfacultypublication07},
which randomizes the graph with $r$ real
links deleted and another $r$ fake links introduced.
\new{The probability for a real link to be preserved in the
perturbed graph is $1-{r}/{m}$, which should not
be small otherwise the utility would not be preserved.
Even considering ${r}/{m}=0.5$ (which would substantially hurt utility~\cite{Hay:csfacultypublication07}), the posterior probability
for a link using the method of Hay et al. would be $0.5$, \emph{even without prior information}. 
In contrast, {our analysis for \system{} considers a worst-case prior, and shows that the
 posterior probability is smaller than $0.5$ for more than 50\% of the links when $k=20$ in Fig.~\ref{probability}}. Therefore, our \system{} provides significantly higher privacy than the work of Hay et al.}}\\
\new{{\bf{Adversaries with structural and contextual information:}} Note that our analysis so far focuses on quantifying link-privacy under an adversary with prior information about the original network structure (including link prediction capabilities). In addition, some adversaries may also have access to contextual information about users in the social network, such as user attributes, which can also be used to predict network links (e.g., social-attribute network prediction model in \cite{gong:IMC12}). We further computed the prior probability using such social-attribute network prediction model in \cite{gong:IMC12} and showed the link probability for the Google+ dataset in Fig.~\ref{context}. The posterior probability of our \system{} is closer to the prior probability and thus \system{} achieves better privacy performance than previous work.}   
\subsection{Indistinguishability}\label{indis}
Based on the posterior probability of a link under the worst case $P(L_t|\{G^\prime_i\}_{i=0}^t,\{\widetilde{G}_i(L_t)\}_{i=0}^t)$, we need to qualify the privacy metric for adversaries who aim to distinguish the posterior probability with the prior probability.
Since our goal is to reduce the information leakage
of $L_t$ based on the perturbed graphs $\{G^{\prime}_i\}_{i=0}^t$
and the prior knowledge $\{\widetilde{G}_i(L_t)\}_{i=0}^t$, we
consider the metric of \textit{indistinguishability} to quantify
privacy, which can be evaluated by
the conditional entropy of a private message given the
observed variables \cite{Cover:IT12}. The objective for an obfuscation scheme is to maximize the indistinguishability
of the unknown input ${I}$ given the observables ${O}$,
i.e. $H({I}|{O})$ (where $H$ denotes entropy of a variable \cite{Cover:IT12}).
{Here, we define our metric for link privacy as
\begin{mydef}\label{equiprivacy}
\new{The \emph{indistinguishability} for a link $L_t$ in the original graph $G_t$ that the adversary can infer from the perturbed graph $G_t^{\prime}$ under the adversary's prior information
$\{\widetilde{G}_i(L_t)\}_{i=0}^t$ is defined as
$\mathrm{Privacy}_{\mathrm{id}}=H(L_t|\{G^\prime_i\}_{i=0}^t,\{\widetilde{G}_i(L_t)\}_{i=0}^t)$.}
\end{mydef}
\indent Furthermore, we quantify the behavior of indistinguishability
over time. For our analysis, we continue to consider the worst case prior
of the adversary knowing the entire graph sequence except the link $L_t$.
To make the analysis tractable, we add another condition that if the
link $L$ exists, then it exists in all the graphs (link deletions are
rare in real world social networks). For a large-scale graph,
only one link would not affect the clustering result. Then, we have
\begin{mythe}
The \emph{indistinguishability} decreases with time,
\begin{equation}\label{tempo_equi}
H(L|\{G_i^{\prime}\}_{i=0}^t, \{\widetilde{G}_i(L)\}_{i=0}^t)
\ge H(L|\{G_i^{\prime}\}_{i=0}^{t+1}, \{\widetilde{G}_i(L)\}_{i=0}^{t+1})
\end{equation}
\end{mythe}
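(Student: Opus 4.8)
The plan is to reduce the statement to the elementary information-theoretic fact that conditioning never increases entropy, so the real work is in setting up the random variables so that both sides of \eqref{tempo_equi} are the conditional entropy of the \emph{same} target quantity. First I would fix notation under the standing hypothesis that $L$ either exists in every graph of the series or in none: this hypothesis makes $L$ a \emph{single} binary random variable, rather than a time-indexed family $L_0,L_1,\dots$, and it is precisely what licenses comparing $H(L\mid\cdot)$ at time $t$ with $H(L\mid\cdot)$ at time $t+1$. Without it the two sides would be entropies of \emph{different} random variables and the inequality would be meaningless; with it, the only thing that changes between the two sides is the conditioning set.

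Next I would make the nesting of the observation sets explicit. Writing $Y_t=(\{G_i'\}_{i=0}^t,\{\widetilde{G}_i(L)\}_{i=0}^t)$ for the adversary's total view after the $t$-th release, the evolution of the graph series gives
\begin{equation}
Y_{t+1}=\bigl(Y_t,\,G_{t+1}',\,\widetilde{G}_{t+1}(L)\bigr),
\end{equation}
i.e. passing from $t$ to $t+1$ simply appends the newly published perturbed graph $G_{t+1}'$ and the new slice of prior $\widetilde{G}_{t+1}(L)$ to everything already known. Thus the left-hand side of \eqref{tempo_equi} is $H(L\mid Y_t)$ and the right-hand side is $H(L\mid Y_{t+1})$.

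Then I would invoke monotonicity of conditional entropy under extra conditioning \cite{Cover:IT12}: for any jointly distributed $X,Y,Z$ one has $H(X\mid Y)\ge H(X\mid Y,Z)$, with equality iff $X$ and $Z$ are conditionally independent given $Y$. Applying this with $X=L$, $Y=Y_t$, and $Z=(G_{t+1}',\widetilde{G}_{t+1}(L))$ yields
\begin{equation}
H(L\mid Y_t)\ge H\bigl(L\mid Y_t,\,G_{t+1}',\,\widetilde{G}_{t+1}(L)\bigr)=H(L\mid Y_{t+1}),
\end{equation}
which is exactly the claimed inequality. The interpretation of the equality condition also falls out for free: the decrease is strict precisely when the fresh release $G_{t+1}'$ carries genuine information about the target link $L$ given the past, and the indistinguishability stays flat only when the new observation is conditionally uninformative.

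I expect the only delicate point to be justifying that $L$, $Y_t$, and the appended pair $(G_{t+1}',\widetilde{G}_{t+1}(L))$ all live on a single probability space as genuinely jointly distributed variables, so that the generic entropy inequality applies verbatim rather than merely heuristically. This is where the remark that a single link does not perturb the clustering outcome does its work: it guarantees that the randomized mechanism producing $G_{t+1}'$ is a well-defined conditional law coupled to the earlier releases in the manner assumed, so that the posterior factorization developed for the anti-inference analysis is consistent across time. Granting that coupling, the theorem is immediate; all of the substantive modeling is front-loaded into the hypotheses, and the inequality itself is a one-line consequence of ``conditioning reduces entropy.''
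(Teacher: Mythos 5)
Your proof is correct and follows essentially the same route as the paper: the paper's entire argument is the one-line invocation of \emph{conditioning reduces entropy}, applied to the nested conditioning sets at times $t$ and $t+1$. Your additional care in formalizing the nesting $Y_{t+1}=(Y_t, G_{t+1}', \widetilde{G}_{t+1}(L))$ and in noting why the standing hypothesis makes $L$ a single time-invariant random variable is a sound elaboration of exactly what the paper leaves implicit.
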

\indent The inequality follows from the theorem
\textit{conditioning reduces entropy} in \cite{Cover:IT12}.
Eq.\ref{tempo_equi} shows that the \emph{indistinguishability} would
not increase as time evolves.
The reason is that over time, multiple perturbed graphs
can be used by the adversary to infer more information about link $L$.\\
\indent Next, we theoretically show why \system{}
 has better privacy performance than the static method.
 For each graph $G_t$, denote the
 perturbed graphs using \system{} and the static method as
$G^{\prime}_t, G^{\prime,s}_{t}$, respectively.
\begin{mythe}
The \emph{indistinguishability} for \system{}
is greater than that for the static perturbation method, i.e.
\begin{equation}\label{equivocation}
H(L_t|\{{G}^{\prime}_{i}\}_{i=0}^t, \{\widetilde{G}_i(L_t)\}_{i=0}^t)
\ge H(L_t|\{G^{\prime,s}_{i}\}_{i=0}^t, \{\widetilde{G}_i(L_t)\}_{i=0}^t)
\end{equation}
\end{mythe}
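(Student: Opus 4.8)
The plan is to recast the entropy inequality (\ref{equivocation}) as a statement about mutual information and then exploit the structural fact that \system{} reuses perturbations across time whereas the static method re-randomizes at every snapshot. Writing $O$ for the observed perturbed sequence and $W=\{\widetilde{G}_i(L_t)\}_{i=0}^t$ for the fixed worst-case prior, the identity $H(L_t\mid O,W)=H(L_t\mid W)-I(L_t;O\mid W)$ holds for both schemes; since the prior $W$ is identical in the two cases, $H(L_t\mid W)$ cancels and it suffices to prove the reversed inequality $I(L_t;\{G'_i\}_{i=0}^t\mid W)\le I(L_t;\{G^{\prime,s}_i\}_{i=0}^t\mid W)$. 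In words, \system{}'s released sequence must leak no more information about $L_t$ than the static sequence does.

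First I would localize the dependence on $L_t$. Under the standing assumptions of this section (a single link does not alter the clustering, and if $L$ exists it exists in every snapshot), the community partition and the sequence of \emph{changed}/\emph{unchanged} labels are functions of $W$ alone, hence the same whether $L_t=0$ or $L_t=1$. Consequently only the perturbed subgraph of the community (or, for an inter-cluster link, the community pair) containing $L_t$ carries information about $L_t$, because \system{} perturbs each block independently; all other blocks are, conditioned on $W$, independent of $L_t$ and drop out of the mutual information. Within the relevant block, selective perturbation (steps 4--6 of Algorithm~\ref{dynamic_algorithm}) freezes the released subgraph between consecutive \emph{unchanged} labels and draws a fresh independent perturbation only at the steps where the block is relabelled \emph{changed}. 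Thus the relevant part of $\{G'_i\}_{i=0}^t$ is a deterministic copy-function of $k$ fresh perturbations $X_1,\dots,X_k$, where $k\le t+1$ is the number of change-times, whereas the static method supplies $t+1$ fresh perturbations $Y_0,\dots,Y_t$.

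Because \system{} invokes exactly the Mittal et al.\ static kernel on each changed block, the $X_j$ and the $Y_i$ are drawn from the same conditional law given $L_t$ and $W$, so conditioned on $L_t$ and $W$ all of $X_1,\dots,X_k,Y_0,\dots,Y_t$ are i.i.d.\ observations of one channel. The data-processing inequality applied to the deterministic copying map gives $I(L_t;\{G'_i\}_{i=0}^t\mid W)=I(L_t;X_1,\dots,X_k\mid W)$, and the monotonicity of mutual information in the number of conditionally i.i.d.\ observations --- itself a consequence of \emph{conditioning reduces entropy}, the same tool invoked for the preceding theorem --- yields $I(L_t;X_1,\dots,X_k\mid W)\le I(L_t;Y_0,\dots,Y_t\mid W)$ whenever $k\le t+1$. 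Combining these with the entropy identity proves (\ref{equivocation}). I expect the main obstacle to be the localization step: rigorously arguing that the clustering labels and change-times are measurable with respect to $W$ only, so that they cannot themselves act as a side channel about $L_t$, and that the fresh perturbations on the active block are genuinely conditionally independent across change-times given $L_t$ and $W$. The inter-cluster case needs a parallel check, since its freezing rule in step 6 is applied per community pair rather than per community.
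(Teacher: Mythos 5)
Your overall architecture --- localize to the block containing $L_t$, view \system{}'s release as a deterministic copy-function of its fresh perturbations, and compare information content with the static release via the identity $H(L_t\mid O,W)=H(L_t\mid W)-I(L_t;O\mid W)$ --- is sound, and it is in essence the same ``reuse means fewer fresh observations'' idea that the paper formalizes with \emph{conditioning reduces entropy}. However, the step that closes your argument contains a genuine flaw: the claim that, conditioned on $(L_t,W)$, the \system{} fresh perturbations $X_1,\dots,X_k$ and the static perturbations $Y_0,\dots,Y_t$ are \emph{i.i.d.\ observations of one channel}. They are conditionally independent (fresh randomness each time), but they are not identically distributed: the Mittal et al.\ kernel is applied to \emph{different} graphs at different timestamps, because the sequence evolves (nodes and links are added, and the community containing $L_t$ itself changes across snapshots). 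Without identical distributions, ``fewer observations leak no more information'' is false in general --- a single highly informative observation can dominate many weakly informative ones --- so the inequality $I(L_t;X_1,\dots,X_k\mid W)\le I(L_t;Y_0,\dots,Y_t\mid W)$ cannot be deduced from $k\le t+1$ alone.

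The repair is to match by timestamp rather than by count. Each $X_j$ is the static kernel applied to the changed block at a change-time $i_j$, so $(X_1,\dots,X_k)$ has the same joint conditional law given $(L_t,W)$ as the relevant-block parts of $(Y_{i_1},\dots,Y_{i_k})$, i.e.\ as a \emph{sub-collection} of the static observations taken at those same instants. The chain rule together with nonnegativity of conditional mutual information (equivalently, conditioning reduces entropy) then gives $I(L_t;Y_{i_1},\dots,Y_{i_k}\mid W)\le I(L_t;Y_0,\dots,Y_t\mid W)$, and your entropy identity finishes the proof. This timestamp-matched argument is precisely what the paper's proof does, inductively: at each snapshot it notes that \system{}'s new release adds only $G^{\prime,s}_{t\mathrm{-ch}}$ beyond what is already deterministic from $G^{\prime}_{t-1}$, that $G^{\prime,s}_{t\mathrm{-ch}}$ is a sub-object of the full static release $G^{\prime,s}_{t}$ at the \emph{same} time $t$, and that conditioning on the larger object can only lower the entropy. (One caveat you share with the paper: both arguments idealize the whole-graph static perturbation as decomposing into independent per-block perturbations, whereas the static method's random walks actually cross community boundaries, so the containment of $G^{\prime,s}_{t\mathrm{-ch}}$ in $G^{\prime,s}_{t}$ is approximate; your explicit flagging of the localization and inter-cluster issues is a point in your write-up's favor.)
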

\begin{proof}
In \system{}, the perturbation for the current graph
$G_t$ is based on perturbation for $G_{t-1}$. Let us denote the \textit{changed} subgraph between
$G_{t-1}, G_t$ as $G_{t\mathrm{-ch}}$, then
\begin{displaymath}
\begin{aligned}
&H(L_t|\{{G}^{\prime}_{i}\}_{i=0}^t, \{\widetilde{G}_i(L_t)\}_{i=0}^t)\\
=&H(L_t|\{G^\prime_{i}\}_{i=0}^{t-2}, G^{\prime}_{t-1},G^{\prime}_t-G^{\prime}_{t\mathrm{-ch}}, G^{\prime,s}_{t\mathrm{-ch}}, \{\widetilde{G}_i(L_t)\}_{i=0}^t)\\
=&H(L_t|\{G^\prime_{i}\}_{i=0}^{t-1}, G_{t\mathrm{-ch}}^{\prime,s}, \{\widetilde{G}_i(L_t)\}_{i=0}^t)\\
\ge &H(L_t|\{G^\prime_{i}\}_{i=0}^{t-1},{G_t}^{\prime,s}, \{\widetilde{G}_i(L_t)\}_{i=0}^t)\\
\ge  &H(L_t|\{G^{\prime,s}_{i}\}_{i=0}^t, \{\widetilde{G}_i(L_t)\}_{i=0}^t)
\end{aligned}
\end{displaymath}
where the first inequality also comes from the theorem
\textit{conditioning reduces entropy} in \cite{Cover:IT12}. The second inequality
generalizes the first inequality from a snapshot $t$ to the entire sequence.
From Eq.\ref{equivocation}, we can see that \system{}
may offer superior \textit{indistinguishability}
compared to the static perturbation, and thus provides
higher privacy.
\end{proof}
{\bf{Comparison with previous work:}} Next, we experimentally analyze our \emph{indistinguishability} metric
over time. Fig.~\ref{tempo_equivocation} depicts the \emph{indistinguishability} metric using
the whole Facebook graph sequence and the sampled Facebook graph sequence
with 80\% overlap. We can see that the static perturbation leaks more
information over time. In contrast, the selective perturbation achieves
significantly higher \emph{indistinguishability}. In Fig.~\ref{tempo_equivocation}(a),
after 9 snapshots, and using $k=5$, the \emph{indistinguishability} of the
static perturbation method is roughly $1/{10}$ of
the indistinguishability of \system{}.
This is because selective perturbation explicitly takes the temporal
evolution into consideration, and stems privacy degradation via
the selective perturbation step. Comparing Fig.~\ref{tempo_equivocation}(a)
and (b), \system{}
has more advantages for larger overlapped graph sequence.\\
\indent \new{We also compare with the work of Hay et al. in \cite{Hay:csfacultypublication07},
For the first timestamp, the probability for a real link
to be preserved in the anonymized graph is $1-{r}/{m}$.
As time evolves, the probability would decrease to
$(1-{r}/{m})^t$. Combined with the prior probability, the corresponding \emph{indistinguishability} for the method of Hay et al. is
shown as the black dotted line in Fig.~\ref{tempo_equivocation},
which converges to 0 very quickly
(we also consider ${r}/{m}=0.5$ which would substantially hurt utility~\cite{Hay:csfacultypublication07})
Compared with the work of Hay et al,
\system{}
significantly improves privacy performance. {Even when $t=1$, \system{} with $k=20$ achieves up to 10x improvement over the approach of Hay et al. in the \emph{indistinguishability} performance}}.
\begin{figure*}[!t]
\renewcommand{\captionfont}{\footnotesize}
\centering{
\label{SPgoogle} 
\includegraphics[width=2.7in,height=1.3in]{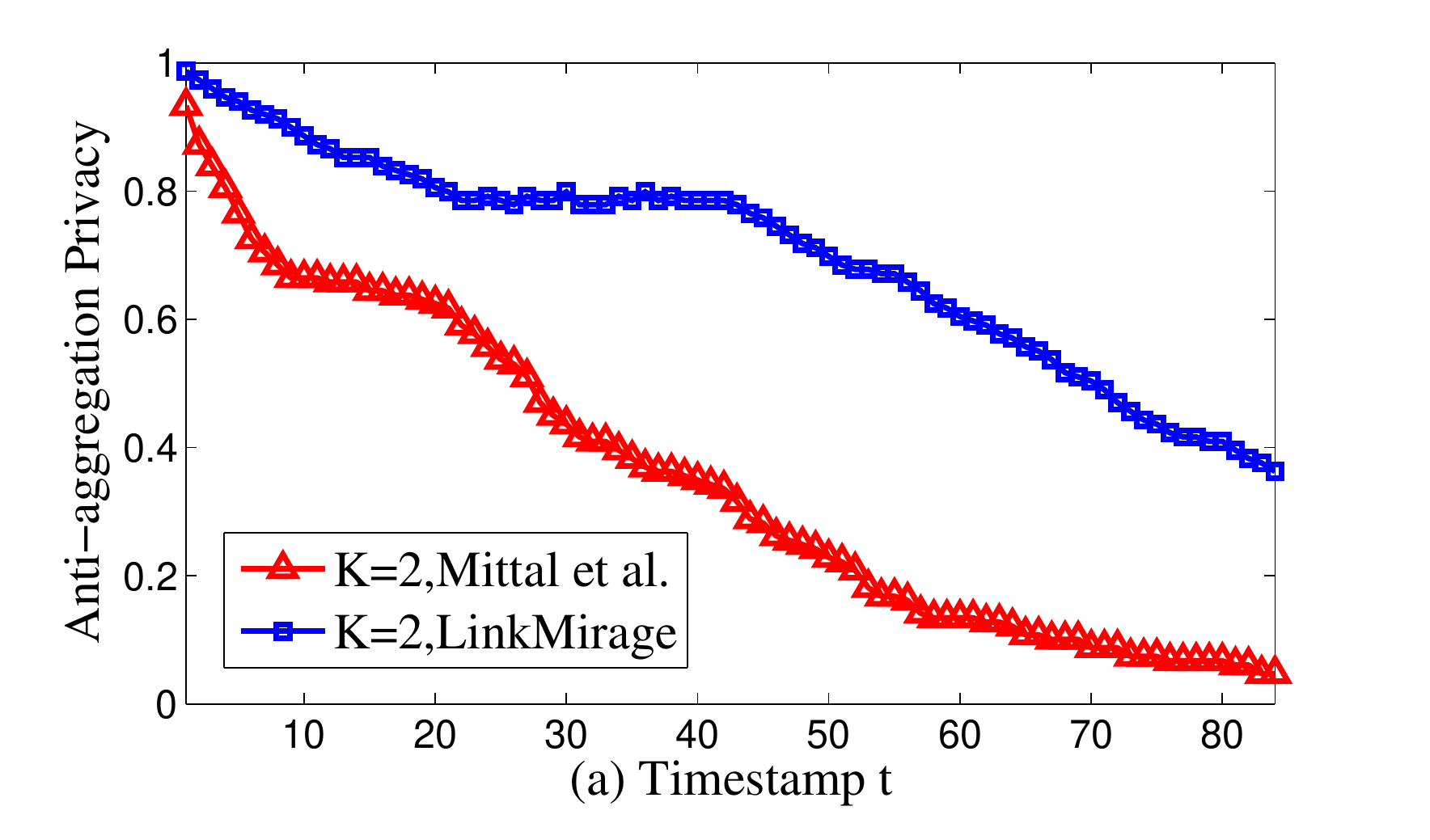}}
\hspace{0.5in}
\centering{
\label{SPfb} 
\includegraphics[width=2.7in,height=1.3in]{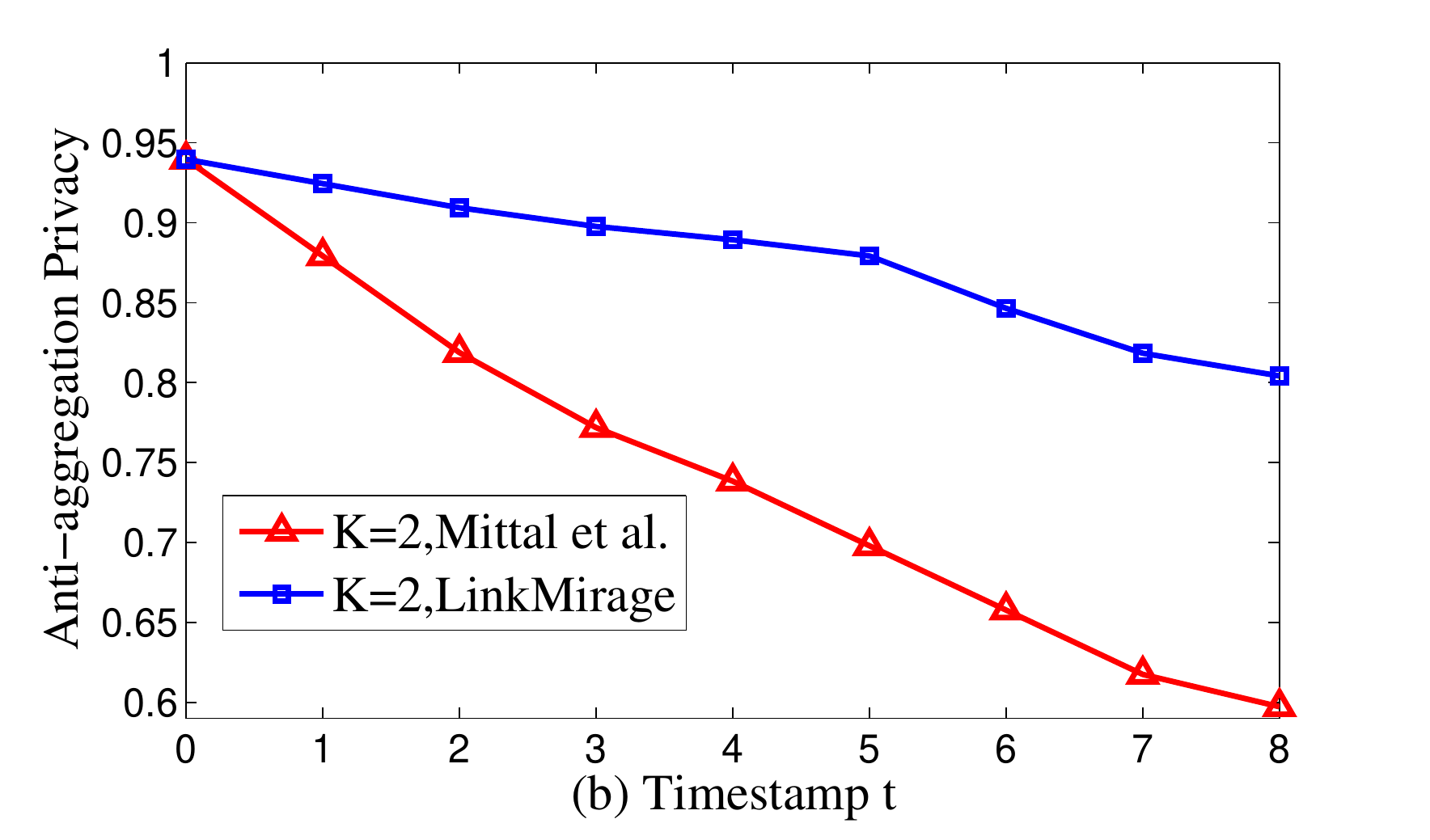}}
\vspace{-0.8em}
\caption{(a)(b) show the temporal anti-aggregation privacy for
the Google+ dataset and the Facebook dataset, respectively.
The anti-aggregation privacy decreases as time evolves
because more information is leaked with more perturbed
graphs available. Leveraging selective perturbation, \system{} achieves much better anti-aggregation
privacy than the static baseline method.}
\label{SP}
\end{figure*}
\subsection{Anti-aggregation Privacy}\label{antiaggre}
{Next, we consider the adversaries who try to aggregate all the previously published graphs to infer more information.  Recall that after community detection in our algorithm,
we anonymize the links by leveraging the $k$-hop random walk.
Therefore, the perturbed graph $G^\prime$ is actually a sampling of
the $k$-hop graph $G^k$, where the $k$-hop graph $G^k$ represents
graph where all the $k$-hop neighbors in the original graph
are connected. It is intuitive that a larger difference between
$G^k$ and $G^\prime$ represents better privacy. Here, we utilize
the distance between the corresponding transition probability matrices
$\|P^k_t-P_t^\prime\|_{\mathrm{TV}}$ \footnote{\final{We choose the total variance distance to evaluate the statistical distance between  $P^k_t$ and $P_t^\prime$ as in \cite{Mittal:NDSS13}.}} to measure this difference.
And we extend the definition of total variance \cite{horn:matrix12} from vector to matrix by
averaging total variance distance of each row in the matrix, i.e. $\|P^k_t-P_t^\prime\|_{\mathrm{TV}}=\frac{1}{|V_t|}{\sum_{v=1}^{|V_t|}\|P_t^k(v)-P_t^\prime(v)\|_{\mathrm{TV}}}$,
where $P_t^k(v), P_t^\prime(v)$ denotes the $v$-th row of $P_t^k, P_t^\prime$.
We then formally define the anti-aggregation privacy as
\begin{mydef}
The anti-aggregation privacy for a perturbed graph $G_t^{\prime}$
with respect to the original graph $G_t$ and the perturbation parameter $k$
is 
$\mathrm{{Privacy}_{aa}}(G_t, G_t^\prime, k)={\|P_t^k-P_t^\prime\|_{\mathrm{TV}}}.$
\end{mydef}
}
The adversary's final objective is to obtain an estimated measurement of the original
graph, e.g. the estimated transition probability matrix $\hat P_t$
which satisfies ${\hat P_t}^k=P_t^\prime$. A straightforward
manner to evaluate privacy is to compute the estimation error of the transition probability matrix
i.e. $\|P_t-{\hat P_t}\|_{\mathrm{TV}}$. We can derive the relationship between the anti-aggregation
privacy and the estimation error as ({we defer the proofs to the Appendix to improve readability.})
\begin{mythe}
The anti-aggregation privacy is a lower bound of the estimation error for the adversaries, and
\begin{equation}
{\|P_t^k-P_t^\prime\|_{\mathrm{TV}}}\le k\|P_t-\hat P_t\|_{\mathrm{TV}}
\end{equation}
\end{mythe}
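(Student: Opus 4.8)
The plan is to strip the statement down to a purely algebraic stability bound for matrix powers under the total-variation metric. By the adversary's estimation relationship, $\hat P_t$ is defined through ${\hat P_t}^k = P_t^\prime$, so the left-hand side is exactly $\|P_t^k - {\hat P_t}^k\|_{\mathrm{TV}}$, and the claim reduces to
\[
\|P_t^k - {\hat P_t}^k\|_{\mathrm{TV}} \le k\,\|P_t - \hat P_t\|_{\mathrm{TV}} ,
\]
i.e.\ raising two transition matrices to the $k$-th power inflates their TV distance by at most a factor of $k$. I would prove this using only that $P_t$ and $\hat P_t$ are row-stochastic, with no further appeal to the graph structure.

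The structural tool is the telescoping identity
\[
P_t^k - {\hat P_t}^k = \sum_{i=0}^{k-1} P_t^{\,i}\,(P_t - \hat P_t)\,{\hat P_t}^{\,k-1-i},
\]
which I would check by a one-line induction on $k$ (base case $A^2-B^2 = A(A-B)+(A-B)B$). Applying the triangle inequality to the row-averaged TV norm splits the target into $k$ summands, so the entire argument rests on showing that each summand is bounded by $\|P_t - \hat P_t\|_{\mathrm{TV}}$. The workhorse for a single summand is the contraction property of stochastic matrices under TV distance \cite{Cover:IT12,horn:matrix12}: for a zero-sum signed row vector $x$ and a row-stochastic matrix $N$ one has $\|xN\|_{\mathrm{TV}}\le\|x\|_{\mathrm{TV}}$, since $|(xN)_j|\le\sum_w|x_w|N_{wj}$ and the rows of $N$ sum to one. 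Because every row of $P_t-\hat P_t$ is a difference of probability vectors, right-multiplication by ${\hat P_t}^{\,k-1-i}$ contracts immediately, row by row; the left factor $P_t^{\,i}$ is absorbed by writing each row of $P_t^{\,i}(P_t-\hat P_t)$ as a convex combination of the zero-sum rows of $P_t-\hat P_t$ and invoking convexity of the norm. Combining the two directions gives the per-term bound, and summing the $k$ terms produces the factor $k$.

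The main obstacle is precisely the left factor $P_t^{\,i}$ under the \emph{row-averaged} norm. Right-multiplication contracts cleanly because it only redistributes mass within each fixed row, whereas left-multiplication mixes across rows, and after averaging over the starting vertex the naive bound involves the column sums of $P_t^{\,i}$, which are not bounded by one for a non-regular social graph. The robust fallback, which I would use to keep the argument honest, is that the \emph{per-row} inequality $\|[P_t^k-{\hat P_t}^k](v)\|_{\mathrm{TV}}\le k\max_w\|P_t(w)-\hat P_t(w)\|_{\mathrm{TV}}$ holds unconditionally from the convexity step above; the clean stated form with the averaged norm on the right is then recovered exactly when $P_t$ preserves the uniform reference measure (so that the relevant column sums equal one), which is the natural regime for the row-averaged TV metric. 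Either I would present the bound in this per-row form or make the stationarity assumption explicit, but in both cases the algebraic core — telescoping together with TV-contraction — is identical.
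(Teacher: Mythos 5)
Your proof is correct and is essentially the paper's own argument: the paper's appendix uses the one-step split $P_t^k-\hat P_t^k=(P_t-\hat P_t)P_t^{k-1}+\hat P_t\bigl(P_t^{k-1}-\hat P_t^{k-1}\bigr)$ applied recursively, which is exactly your telescoping identity unrolled, with the same TV-contraction of row-stochastic matrices as the workhorse. The left-factor subtlety you flag is genuine and is in fact glossed over in the paper, which writes $\frac{1}{2|V_t|}\sum_{v}\|\hat P_t(v)(P_t^{k-1}-\hat P_t^{k-1})\|_1=\|P_t^{k-1}-\hat P_t^{k-1}\|_{\mathrm{TV}}$ even though, under the row-averaged norm, this step requires the column sums of $\hat P_t$ to be at most one (e.g., a doubly stochastic $\hat P_t$, as for a regular graph); your per-row $\max$ formulation or an explicit stationarity assumption is the honest repair, so on this point your write-up is more careful than the paper's.
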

We further consider the network evolution where the adversary can combine all the perviously
perturbed graphs together to extract more $k$-hop information
of the current graph. Under this situation, a strategic
methodology for the adversary is to
combine the perturbed graph series $G_0^\prime,\cdots,G_t^\prime$,
to construct a new perturbed graph $\breve G_t^\prime$, where
$\breve G_t^\prime=\bigcup_{i=0,1\cdots,t} G_i^\prime$. The combined
perturbed graph $\breve G_t^\prime$ contains more
information about the $k$-hop graph $G^k_t$ than $G_t^\prime$. Correspondingly,
the transition probability matrix $\breve P_t^\prime$ of the combined perturbed graph $\breve G_t^\prime$ 
would provide more information than
$P_t^\prime$. That is to say, the \emph{anti-aggregation privacy} decreases with time.\\
{\bf{Comparison with previous work:}} We evaluate the anti-aggregation privacy of \system{} on both the Google+ dataset and the Facebook
dataset. Here we perform our experiments based on a conservative assumption that a link
always exists after it is introduced. The anti-aggregation privacy decreases with time
 since more information about the $k$-hop neighbors of
 the graph is leaked as shown in Fig.~\ref{SP}. Our selective
 perturbation preserves correlation
 between consecutive graphs, therefore leaks less information
 and achieves better privacy than the static baseline method.
 For the Google+ dataset, the anti-aggregation privacy for the method of
 Mittal et al. is only $1/10$ of \system{} after 84 timestamps.
\subsection{Relationship with Differential Privacy}
Our anti-inference privacy analysis considers the worst-case
adversarial prior to infer the existence of a
link in the graph. Next, we uncover a novel relationship
between this anti-inference privacy 
 and differential
privacy.\\
\indent Differential privacy is a popular theory to
evaluate the privacy of a perturbation scheme \cite{dwork:Springer06, dwork:ACM09,liu2016dependence,}.
The framework of differential privacy defines \emph{local sensitivity}
 of a query function $f$ on a dataset $D_1$ as the maximal
$|f(D_1)-f(D_2)|_1$ for all $D_2$ differing
from $D_1$ in at most one element
$df=\max_{D_2}\|f(D_1)-f(D_2\|_1$.
Based on the theory of differential privacy, a mechanism that adds independent Laplacian
noise with parameter $df/\epsilon$ to the query function $f$, satisfies
$\epsilon$-differential privacy. The degree of added noise, which determines
the utility of the mechanism, depends on the local sensitivity.
To achieve a good utility as well as privacy, the local sensitivity $df$
should be as small as possible. \final{The following lemma demonstrates the effectiveness of worst-case Bayesian analysis since the objective for good utility-privacy balance under our worst-case Bayesian analysis is equivalent to under differential privacy.}
\begin{myrem}
The requirement for good utility-privacy
balance in differential privacy is equivalent to the objective
of our Bayesian analysis under the worst case. ({We defer the proofs to Appendix to improve readability.})
\end{myrem}
}
\subsection{Summary for Privacy Analysis}
\begin{enumerate}[$\bullet$]
\item{\system{} provides rigorous privacy guarantees to defend against adversaries who have prior information about the original graphs, and the adversaries who aim to combine multiple released graphs to infer more information.}
\item{\system{} shows significant privacy advantages in \emph{anti-inference privacy, indistinguishability and anti-aggregation privacy}, by outperforming previous methods by a factor up to $10$.}
\end{enumerate}

\vspace{-1em}
\section{Applications}\label{app}
{Applications such as anonymous communication \cite{Dingledine:USENIX04,nagaraja:pet07,Mittal:piscesNDSS13} and vertex anonymity mechanisms~\cite{zhou:SIGKDD08,sala:imc11,liu:sigmod08} can utilize \system{} to obtain 
the entire obfuscated social graphs. Alternatively, each individual user can query \system{} for his/her perturbed neighborhoods to set up distributed social relationship based applications such as SybilLimit \cite{Yu:IEEES&P08}. Further, the OSN providers can also leverage \system{} to perturb the original social topologies only once and support multiple privacy-preserving graph analytics, e.g., privately compute the pagerank/modularity of social networks. }
\begin{figure}[!t]
\renewcommand{\captionfont}{\footnotesize}
\centering
\includegraphics[width=2.7in,height=1.3in]{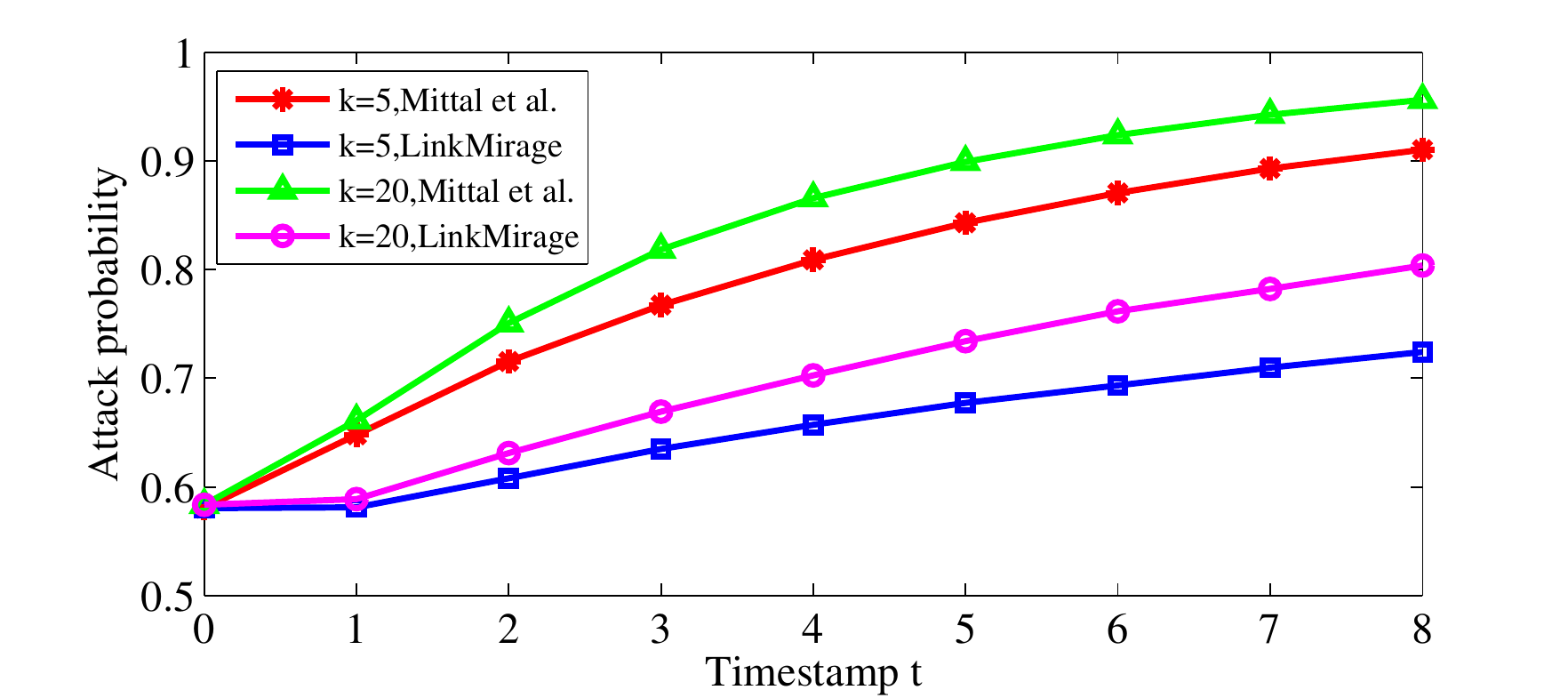}
\DeclareGraphicsExtensions.
\caption{The worst case probability of deanonymizing users'
communications ($f=0.1$). Over time, \system{} provides
better anonymity compared to the static approaches.}
\label{pisces}
\end{figure}
\begin{figure*}[!t]
\renewcommand{\captionfont}{\footnotesize}
\centering{
\label{fig10a} 
\includegraphics[width=2.7in,height=1.3in]{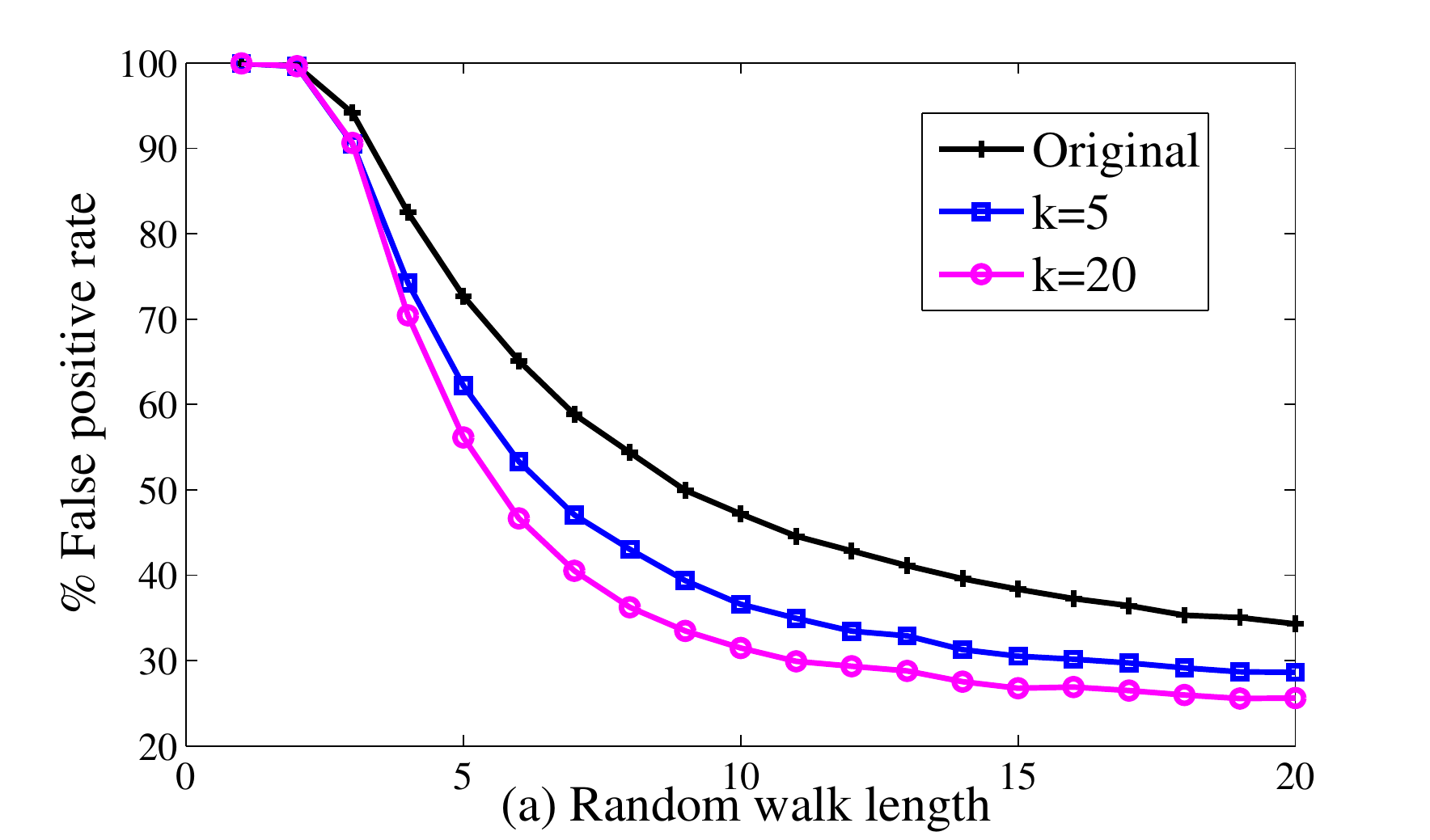}}
\hspace{0.5in}{
\label{fig10b} 
\includegraphics[width=2.7in,height=1.3in]{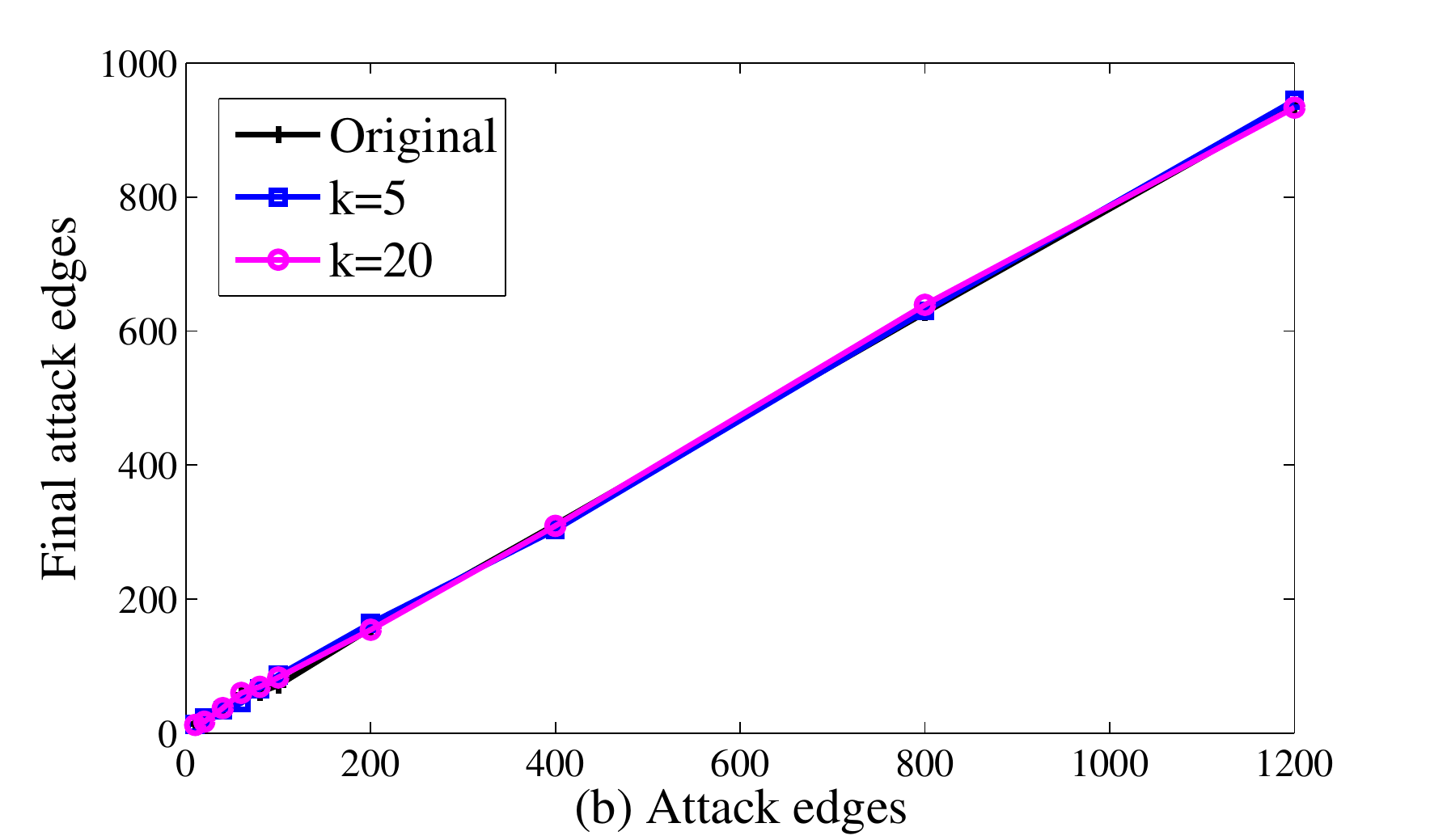}}
\vspace{-0.5em}
\caption{(a) shows the false positive rate
for Sybil defenses. We can see that the perturbed graphs
have lower false positive rate than the original graph.
Random walk length is proportional
to the number of Sybil identities that can be
inserted in the system. (b) shows that the final attack edges
are roughly the same for the perturbed graphs
and the original graphs.}
\label{sybildefense} 
\end{figure*}
\subsection{Anonymous Communication \cite{Dingledine:USENIX04,nagaraja:pet07,Mittal:piscesNDSS13}}
 As a concrete
application, we consider the problem of anonymous
communication~\cite{Dingledine:USENIX04,nagaraja:pet07,Mittal:piscesNDSS13}. Systems
for anonymous communication aim to improve user's privacy by hiding the communication
 link between the user and the remote destination.
Nagaraja et al. and others~\cite{Dingledine:USENIX04,nagaraja:pet07,Mittal:piscesNDSS13} have suggested that the
security of anonymity systems can be improved by leveraging
users' trusted social contacts. \\
\indent We envision that our work can be a key enabler for the
design of such social network based systems,
while preserving the privacy of users' social relationships.
We restrict our analysis to
low-latency anonymity systems that leverage social links,
such as the Pisces protocol~\cite{Mittal:piscesNDSS13}.\\
\indent Similar to the Tor protocol, users in Pisces rely
on proxy servers and onion routing for anonymous
communication. However, the relays involved in the
onion routing path are chosen by performing a random
walk on a trusted social network topology.
Recall that \system{} better preserves the evolution of temporal graphs in Fig.~\ref{visualutility}. 
We now show that this translates into improved
anonymity over time, by performing an analysis of the degradation
of user anonymity over multiple graph snapshots. For each graph
snapshot, we consider a worst case anonymity analysis as follows:
if a user's neighbor in the social topology is malicious, then
over multiple communication rounds (within that graph instance)
its anonymity will be compromised using state-of-the-art traffic
analysis attacks~\cite{wright:sp03}.
Now, suppose that all of a user's neighbors in the first graph instance are honest.
As the perturbed graph sequence evolves, there is further
potential for degradation of user anonymity since in the
subsequent instances, there is a chance of the user connecting
to a malicious neighbor. Suppose the probability for
 a node to be malicious is $f$. Denote $n_t(v)$ as the distinct
neighbors of node $v$ at time $t$. For a temporal graph sequence,
the number of the union neighbors $\cup_{k=0}^{t}n_k(v)$
of $v$ increases with time, and the probability for $v$ to be attacked under
the worst case is $P^{attack}_t(v)=1-(1-f)^{|{\cup}_{k=0}^{t}n_k(v)|}$. \new{Note that in practice, the adversary's 
prior information will be significantly less than the worst-case adversary.} \\
\indent Fig.~\ref{pisces} depicts the degradation of
the worst-case anonymity 
with respect to the number of perturbed topologies.
We can see that the attack probability for our method is lower
than the static approach {with a factor up to 2}. This is because over consecutive
graph instances, the users' social neighborhood has
higher similarity as compared to the static approach,
reducing potential for anonymity degradation. Therefore,
\system{} can provide better security for
anonymous communication, and other social trust based applications.
\subsection{Vertex Anonymity \cite{zhou:SIGKDD08,sala:imc11,liu:sigmod08}}\label{vertex}
Previous work for vertex anonymity \cite{zhou:SIGKDD08,sala:imc11,liu:sigmod08} would be defeated by de-anonymization techniques \cite{Narayaran:S&P09,srivatsa:CCS12,nilizadeh:CCS14,ji:CCS14}. \system{} can serve as a formal first step for vertex anonymity, and even improve its defending capability against de-anonymization attacks. 
We apply \system{} to anonymize vertices, i.e.
to publish a perturbed topology without labeling
any vertex. In \cite{ji:CCS14}, Ji et al. modeled the anonymization
 as a sampling process where the
sampling probability $p$ denotes the
probability of an edge in
the original graph $G_o$ to exist in the
anonymized graph $G^\prime$.
\system{} can
also be applied for such model, where
the perturbed graph
$G^\prime$ is sampled from the $k$-hop graph $G^k$ (corresponding
to $G_o$). \\
\indent They also derived a theoretical bound of the sampling probability $p$
for perfect de-anonymization, and found that a weaker bound is needed with a larger value
of the sampling probability $p$. Larger $p$ implies that 
$G^\prime$ is topologically more similar to $G$, making
it easier to enable a perfect de-anonymization.
When considering social network evolution, the sampling probability
$p$ can be estimated as ${|E(G_0^\prime, \cdots, G_t^\prime)|}/{|E(G_0^k, \cdots, G_t^k)|}$,
where $E(G_0^\prime, \cdots, G_t^\prime)$ are
the edges of the perturbed graph sequence, and
$E(G_0^k, \cdots, G_t^k)$ are the edges of the
$k$-hop graph sequence.
Compared with the static baseline approach, \system{}
selectively reuses information from previously perturbed graphs,
thus leading to smaller overall sampling probability $p$,
which makes it harder to perfectly de-anonymize the graph sequence.
For example, the average sampling probability $p$
for the Google+ dataset (with $k=2$)
is $0.431$ and $0.973$ for \system{} and the static method respectively.
For the Facebook temporal dataset (with $k=3$), the average sampling
probability $p$ is $0.00012$ and $0.00181$ for \system{}
and the static method respectively.
Therefore, {\system{} is more resilient against de-anonymization
attacks even when applied to vertex anonymity, with up to 10x improvement}. 
\subsection{Sybil Defenses \cite{Yu:IEEES&P08}}
Next, we consider Sybil defenses systems which leverage the published social topologies to detect fake accounts in the social networks. Here, we analyze how the use of a perturbed graph changes
the Sybil detection performance of SybilLimit \cite{Yu:IEEES&P08}, which is a representative Sybil defense system. Each user can query \system{} for his/her perturbed friends to set up the implementation of SybilLimit.
 Fig.~\ref{sybildefense}(a)
depicts the false positives (honest users misclassified as Sybils)
with respect to the random walk length in the Sybillimit protocol.
Fig.~\ref{sybildefense}(b)
shows the final attack edges with respect to the attack edges
in the original topology.
We can see that the false positive rate is much
lower for the perturbed graphs than for
the original graph, while the number of
the attack edges stay roughly the same for the
original graph and the perturbed graphs.
The number of Sybil identities that an adversary
can insert is given by $S=g^{\prime}\cdot w^{\prime}$
($g^\prime$ is the number of attack edges and $w^\prime$
is the random walk parameter in the protocol).
Since $g^{\prime}$ stays almost invariant and the
random walk parameter $w^{\prime}$
(for any desired false positive rate) is reduced,
\system{}
improves Sybil resilience and provides the privacy of the
social relationships such that Sybil
defense protocols continue to be applicable (similar to
static approaches whose Sybil-resilience performance have
been demonstrated in previous work).
\begin{table}[!t]\small
\newcommand{\tabincell}[2]{\begin{tabular}{@{}#1@{}}#2\end{tabular}}
\renewcommand{\tabcolsep}{1.5pt}
\centering
\caption{{Modularity of Perturbed Graph Topologies}}
\begin{tabular}{c|c|c|c|c|c}
\hline
Google+&\tabincell{c}{Original\\Graph}&\tabincell{c}{\system{}\\ $k=2$}&\tabincell{c}{\system{}\\ $k=5$}&\tabincell{c}{Mittal et al.\\ $k=2$}&\tabincell{c}{Mittal et al.\\ $k=5$}\\
\hline
Modularity&0.605&0.601&0.603&0.591&0.586\\
\hline
Facebook&\tabincell{c}{Original\\Graph}&\tabincell{c}{\system{}\\$k=5$}&\tabincell{c}{\system{}\\$k=20$}&\tabincell{c}{Mittal et al.\\ $k=5$}&\tabincell{c}{Mittal et al.\\$k=20$}\\
\hline
Modularity&0.488&0.479&0.487&0.476&0.415\\
\hline
\end{tabular}
\label{modularity}
\end{table}
\subsection{Privacy-preserving Graph Analytics \cite{newman:PNAS06,page:stanford99}}
Next, we demonstrate that \system{} can also benefit the OSN providers for privacy-preserving graph analytics. Previous work in \cite{dwork:Springer06,dwork:ACM09} have demonstrated that the implementation of graph analytic algorithms would also result in information leakage. To mitigate such privacy degradation, the OSN providers could add perturbations (noises) to the outputs of these graph analytics. However, if the OSN providers aim to implement multiple graph analytics, the process for adding perturbations to each output would be rather complicated. Instead, the OSN providers can first obtain the perturbed graph by leveraging \system{} and then set up these graph analytics in a privacy-preserving manner.\\
\indent Here, we first consider the pagerank \cite{page:stanford99} as an effective graph metric. For the Facebook dataset, we have the average differences between the perturbed pagerank score and the original pagerank score as $0.0016$ and $0.0018$ for $k=5$ and $k=20$ respectively in \system{}. In comparison, the average differences are $0.0019$ and $0.0087$ for $k=5$ and $k=20$ in the approach of Mittal et al. {\system{} preserves the pagerank score of the original graph with up to 4x improvement over previous methods.}
Next, we show the modularity \cite{newman:PNAS06} (computed by the timestamp $t=3$ in the Google+
dataset and the Facebook dataset, respectively) in Table~\ref{modularity}.
We can see that \system{} preserves both the pagerank score and the modularity
of the original graph, while the method
of Mittal et al. degrades such graph analytics especially for larger perturbation parameter $k$
 (recall the visual intuition of \system{} in Fig.~\ref{visualutility}).
\subsection{Summary for Applications of LinkMirage}
\begin{enumerate}[$\bullet$]
\item{\system{} preserves the privacy of users' social contacts while enabling the design of social relationships based applications. Compared to previous methods, \system{} results in significantly lower attack probabilities {(with a factor up to 2)} when applied to anonymous communications and higher resilience to de-anonymization attacks {(with a factor up to 10)} when applied to vertex anonymity systems.}
\item{\system{} even surprisingly improves the Sybil detection performance when applied to the distributed SybilLimit systems.}
\item{\system{} preserves the utility performance for multiple graph analytics applications, such as pagerank score and modularity {with up to 4x improvement}.}
\end{enumerate}

\vspace{-1em}
\section{Utility Analysis}\label{utilitysec}
Following the application analysis in Section~\ref{app}, we aim to develop a general metric
to characterize the utility of the perturbed
graph topologies. Furthermore, we theoretically
analyze the lower bound on utility for \system{},
uncover connections between our utility metric
and structural properties of the graph sequence,
and experimentally analyze our metric using
the real-world Google+ and Facebook datasets.
\vspace{-1em}
\subsection{Metrics}
We aim to formally quantify the utility provided by \system{} to encompass a broader range of applications.
One intuitive global utility metric is the degree
of vertices. It is interesting to find that the expected degree of each node in the
perturbed graph is the same as the original degree
and we defer the proof to Appendix to improve readability.
\begin{mythe}\label{degree_expectation} The expected degree of each
node after perturbation by \system{} is the same as in the
original graph: $\forall v \in V_t, \mathbb{E}(\deg^\prime(v))=\deg(v)$,
where $\deg^\prime(v)$ denotes the degree of vertex $v$ in $G_t^\prime$.
\end{mythe}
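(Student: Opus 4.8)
The plan is to prove degree preservation separately for the two kinds of edges that \system{} produces---\emph{intra-cluster} edges generated by the static random-walk method, and \emph{inter-cluster} edges generated by the marginal-node connection rule---and then recombine them by linearity of expectation. Write $a$ for the community containing a fixed vertex $v$. In the original graph its degree splits as $\deg(v)=\deg_{a}(v)+\sum_{b\neq a}\deg_{ab}(v)$, where $\deg_a(v)$ counts $v$'s edges inside $a$ and $\deg_{ab}(v)$ counts its edges to community $b$; the perturbed degree $\deg^\prime(v)$ splits the same way. Since an \emph{unchanged} community reuses a perturbation that was itself produced by these two primitives at an earlier timestamp, it suffices to verify expected-degree preservation for each primitive, after which the statement follows by induction over the time series.

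For the intra-cluster term I would invoke the random-walk structure of the static method of Mittal et al. Their scheme deletes every intra-community edge and replaces it by a fresh edge whose free endpoint is the terminus of a $k$-hop random walk launched from the retained endpoint, so the number of edges inside $a$ is preserved exactly. The transition matrix $P$ restricted to $a$ has stationary measure $\pi(u)\propto\deg_a(u)$, hence $\pi P^{k}=\pi$ yields the identity $\sum_{u}\deg_a(u)\,P^{k}(u,v)=\deg_a(v)$. Decomposing $v$'s new intra-community degree into the retained-endpoint contributions plus the contributions that walk \emph{into} $v$, each incident edge retains $v$ with probability $1/2$ and terminates at $v$ with probability governed by $P^{k}$; summing gives $\tfrac12\deg_a(v)+\tfrac12\sum_u \deg_a(u)P^{k}(u,v)=\deg_a(v)$, so $\mathbb{E}[\deg^\prime_a(v)]=\deg_a(v)$.

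For the inter-cluster term, fix a neighboring community $b$ and let $v=v_a(i)$ be a marginal node. Its expected number of edges into $b$ is $\sum_{j}\Pr[(v_a(i),v_b(j))\in E^\prime]$. The key combinatorial identity is $\sum_{j}\deg(v_b(j))=|E_{ab}|$, since every $a$-$b$ edge is counted exactly once by its endpoint in $b$. Processing the ordered pairs $(a,b)$ and $(b,a)$, the two asymmetric weights $|v_a|$ and $|v_b|$ in the stated connection probability add to cancel the common denominator $|v_a|+|v_b|$, so the effective probability of each inter-edge collapses to the configuration-model value $\deg(v_a(i))\deg(v_b(j))/|E_{ab}|$. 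Substituting and applying the identity gives $\mathbb{E}[\deg^\prime_{ab}(v_a(i))]=\deg(v_a(i))\cdot|E_{ab}|/|E_{ab}|=\deg_{ab}(v_a(i))$, and non-marginal vertices receive no inter-cluster edges so their (zero) inter-degree is trivially preserved.

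Combining, $\mathbb{E}[\deg^\prime(v)]=\mathbb{E}[\deg^\prime_a(v)]+\sum_{b\neq a}\mathbb{E}[\deg^\prime_{ab}(v)]=\deg_a(v)+\sum_{b\neq a}\deg_{ab}(v)=\deg(v)$ by linearity. The main obstacle I anticipate is the inter-cluster bookkeeping: correctly interpreting the asymmetric factor $|v_a|/(|v_a|+|v_b|)$ in the connection probability, verifying that the contributions of the two ordered community pairs sum to the degree-preserving configuration-model probability, and handling vertices that are marginal to several community pairs at once, all resting on the identity $\sum_j\deg(v_b(j))=|E_{ab}|$. The intra-cluster half is comparatively routine once the random-walk stationarity $\sum_u\deg_a(u)P^{k}(u,v)=\deg_a(v)$ is in hand, and can largely be cited from Mittal et al.
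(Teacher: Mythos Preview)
Your proposal is correct and follows essentially the same approach as the paper: split the degree into intra-community and inter-community contributions, invoke Mittal et al.\ for the intra-cluster part, and for the inter-cluster part sum the connection probabilities using $\sum_{j}\deg(v_b(j))=|E_{ab}|$ together with the observation that the $|v_a|$ and $|v_b|$ factors from the two ordered community pairs combine to cancel $|v_a|+|v_b|$. You are more explicit than the paper about the ordered-pair interpretation, the random-walk stationarity argument behind the intra-cluster citation, and the induction over timestamps for unchanged communities, but these are elaborations of the same argument rather than a different route.
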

\new{\begin{table}[!t]\footnotesize
	\newcommand{\tabincell}[2]{\begin{tabular}{@{}#1@{}}#2\end{tabular}}
	\centering
	\caption{\new{Graph Metrics of the Original and the Perturbed Graphs for the Google+ Dataset.}}
	\begin{tabular}{c|c|c}
	\hline
& Clustering Coefficient & Assortativity Coefficient\\
\hline
Original Graph &0.2612 & -0.0152 \\
 \hline
\system{} $k=2$ &    0.2263 &-0.0185 \\
 \hline
\system{} $k=5$ &    0.1829 &-0.0176  \\
 \hline
\system{} $k=10$ &   0.0864&  -0.0092\\
 \hline
\system{} $k=20$ &   0.0136&  -0.0063\\
\hline
	\end{tabular}
	\label{metrics}
	\end{table}}
\new{To understand the utility in a fine-grained level, we further define our utility metric as} 
\begin{mydef}\label{SUDdefinitionwhole} The Utility Distance (UD)
of a perturbed graph sequence $G_0^{\prime},\cdots,
G_T^{\prime}$ with respect to the original graph sequence $G_0,\cdots, G_T$, and
an application parameter $l$ is defined as 
\begin{equation}\label{SUD}
\begin{aligned}
&\mathrm{UD}(G_0,\cdots G_T,G_0^\prime,\cdots G_T^{\prime},l)\\
\vspace{-0.3em}
&=\frac{1}{T+1}\sum_{t=0}^T \sum_{v\in V_t}\frac{1}{|V_t|}\|P_t^l(v)-(P_t^{\prime})^l(v)\|_{TV}
\end{aligned}
\end{equation}
\end{mydef}
\vspace{-0.5em}
\begin{figure*}[!t]
\renewcommand{\captionfont}{\footnotesize}
\centering
\label{SUDgoogle} 
\includegraphics[width=2.7in,height=1.3in]{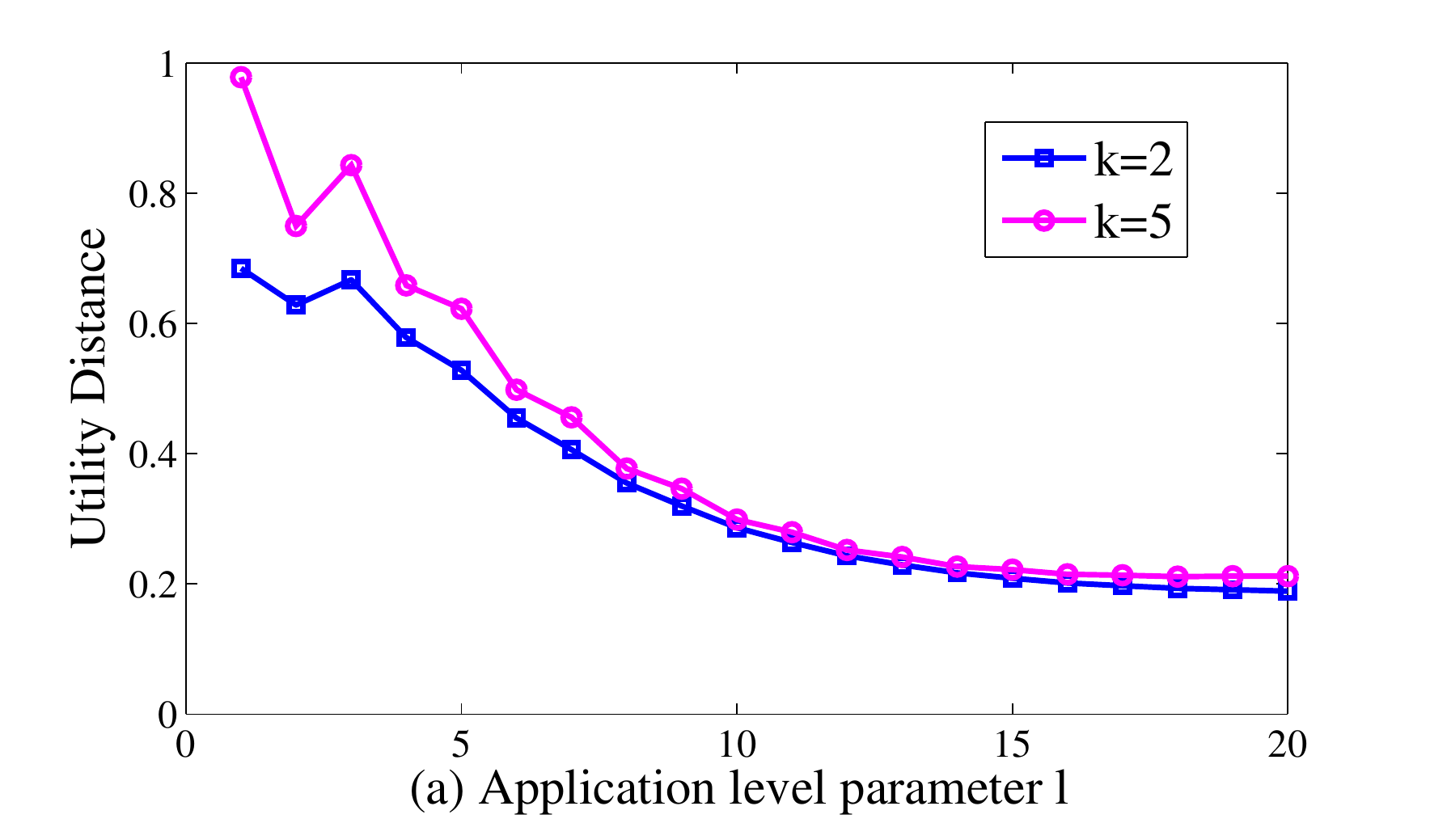}
\hspace{0.5in}
\centering
\label{static_utility_whole} 
\includegraphics[width=2.7in,height=1.3in]{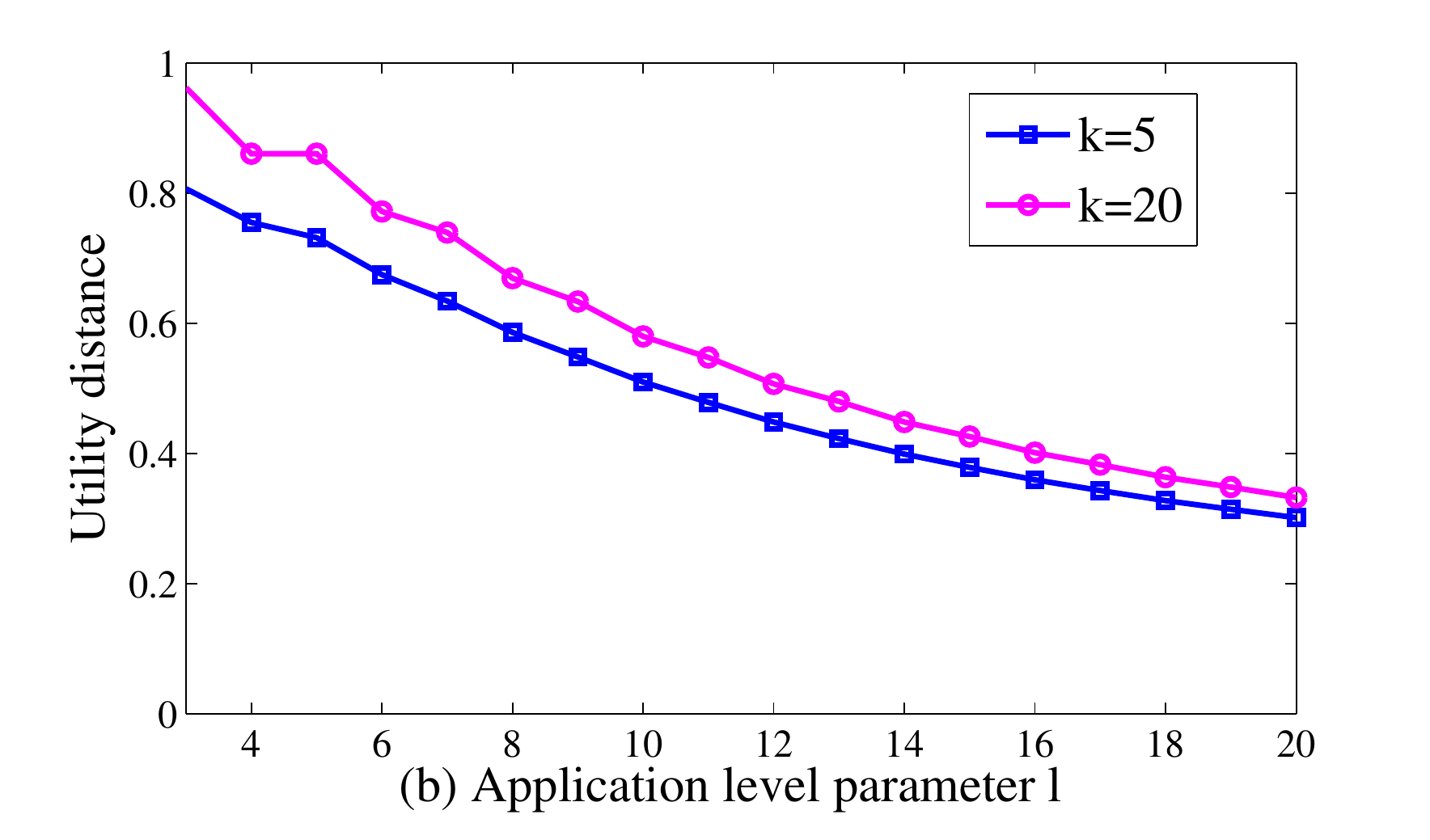}
\vspace{-0.8em}
\caption{(a), (b) show the utility distances using the Google+ dataset and
the Facebook dataset, respectively. Larger perturbation
parameter $k$ results in larger utility distance. Larger application
parameter $l$ decreases the distance, which shows the
effectiveness of \system{} in preserving global community structures.}
\label{SUDTUD} 
\end{figure*}
\indent \new{Our definition for utility distance in Eq.~\ref{SUD} is intuitively reasonable for a broad class of real-world applications, and captures the behavioral
differences of $l$-hop random walks between the original
graphs and the perturbed graphs. We note that
random walks are closely linked to the structural properties
of social networks.
In fact, a lot of social network based security applications
such as Sybil defenses \cite{Yu:IEEES&P08} and anonymity systems \cite{Mittal:piscesNDSS13} directly
perform random walks in their protocols.
The parameter $l$ is application specific;
for applications that require access to fine
grained local structures, such as
recommendation systems \cite{andersen:WWW08}, the value of $l$ should be
small. For other applications that utilize coarse and macro structure of the social graphs, such as Sybil defense mechanisms,
$l$ can be set to a larger value
(typically around 10 in~\cite{Yu:IEEES&P08}). Therefore, this utility metric can quantify the utility performance of \system{} for various applications in a general manner.\\
\indent \new{Note that \system{} is not limited to only preserving the community structure of the original graphs.  We evaluate two representative graph theoretic metrics \emph{clustering coefficient} and \emph{assortativity coefficient}~\cite{gong:IMC12} as listed in Table~\ref{metrics}. We can see that \system{} well preserves such fine-grained structural properties for smaller perturbation parameter $k$. Therefore, the extent to which the utility properties are preserved depends on the perturbation parameter $k$.}}
\vspace{-1em}
\subsection{Relationships with Other Graph Structural Properties}
\new{\indent The mixing time $\tau_\epsilon (G_t)$ measures the time required for the Markov
chain to converge to its stationary distribution, and is defined as
$\tau_{\epsilon}(G_t)=\min_{r}\max_{v}(r\big| |P_t^r(v)-\pi_t|_{\mathrm{TV}}<\epsilon)$.
Based on the Perron-Frobenius theory, we denote the eigenvalues of $P_t$ as
$1=\mu_1(G_t)\ge \mu_2(G_t)\ge \cdots \mu_{|V_t|}(G_t)\ge -1$. The convergence rate of the Markov
chain to $\pi_t$ is determined by the second largest eigenvalue modulus (SLEM) as
$\mu(G_t)=\max \left(\mu_2(G_t),-\mu_{|V_t|}(G_t)\right)$.\\
\indent Since our utility distance is defined by using the transition probability matrix $P_t$,
this metric can be proved to be closely related to structural properties of the graphs,
as shown in Theorem~\ref{mixingtime} and Theorem~\ref{SLEM}.
\vspace{-0.5em}
\begin{mythe}\label{mixingtime} Let us denote the utility distance between the perturbed graph
 $G_t^{\prime}$ and the original graph $G_t$ by $\mathrm{UD}(G_t, G_t^{\prime},l)$, then we have $\tau_{G_t^{\prime}}\left(\mathrm{UD}\left(G_t,G^{\prime}_t,\tau_{G_t}(\epsilon)\right)-\epsilon\right)\ge \tau_{G_t}(\epsilon)$.
\end{mythe}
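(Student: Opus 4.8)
The plan is to chain the mixing-time hypothesis on the original graph together with the definition of utility distance, using the triangle inequality for total variation distance as the glue. Abbreviate $r^{*}=\tau_{G_t}(\epsilon)$ for the quantity appearing inside $\mathrm{UD}$, and let $\delta=\mathrm{UD}(G_t,G_t^{\prime},r^{*})-\epsilon$ denote the threshold at which I evaluate the perturbed graph's mixing time. Since a random walk on an undirected graph is reversible, the distance-to-stationarity profile $r\mapsto\max_v\|\cdot^{\,r}(v)-\pi\|_{\mathrm{TV}}$ is non-increasing; I will invoke this monotonicity at the very end to convert a ``not yet mixed at step $r^{*}$'' statement into the desired lower bound on $\tau_{G_t^{\prime}}$.

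First I would record what $r^{*}=\tau_{G_t}(\epsilon)$ buys us: by Definition of mixing time, $\|P_t^{r^{*}}(v)-\pi\|_{\mathrm{TV}}<\epsilon$ for every vertex $v$. Next, for each $v$ I would apply the reverse triangle inequality in the total variation metric to the three laws $P_t^{r^{*}}(v)$, $(P_t^{\prime})^{r^{*}}(v)$, and $\pi$, obtaining
\[
\|(P_t^{\prime})^{r^{*}}(v)-\pi\|_{\mathrm{TV}} \ge \|P_t^{r^{*}}(v)-(P_t^{\prime})^{r^{*}}(v)\|_{\mathrm{TV}} - \|P_t^{r^{*}}(v)-\pi\|_{\mathrm{TV}}.
\]
Averaging this over all $v\in V_t$ with weight $1/|V_t|$, the first term on the right collapses to $\mathrm{UD}(G_t,G_t^{\prime},r^{*})$ by Definition~\ref{SUDdefinitionwhole}, while the subtracted average is strictly less than $\epsilon$, yielding $\frac{1}{|V_t|}\sum_v\|(P_t^{\prime})^{r^{*}}(v)-\pi\|_{\mathrm{TV}}>\delta$. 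A mean strictly above $\delta$ forces at least one term above $\delta$, so $\max_v\|(P_t^{\prime})^{r^{*}}(v)-\pi\|_{\mathrm{TV}}>\delta$; that is, the perturbed chain has not mixed to within $\delta$ at step $r^{*}$. By the monotonicity noted above, no earlier step mixes either, so $\tau_{G_t^{\prime}}(\delta)\ge r^{*}$, which is exactly the claim.

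The hard part will be reconciling the two stationary distributions: the mixing time of $G_t^{\prime}$ is measured against its own stationary law $\pi_t^{\prime}$, whereas the original-graph hypothesis is phrased against $\pi_t$, and the reverse-triangle step silently assumes a common reference measure. I would resolve this by invoking Theorem~\ref{degree_expectation}, which shows \system{} preserves each vertex degree in expectation; since the random-walk stationary distribution is proportional to degree and the perturbation preserves the total edge count, $\pi_t^{\prime}=\pi_t=:\pi$, so a single $\pi$ may be used throughout. The only remaining loose ends are elementary — the averaging-to-maximum implication and the monotonicity-driven off-by-one (the argument in fact yields the strict bound $\tau_{G_t^{\prime}}(\delta)>r^{*}$, so the stated $\ge$ holds with room to spare).
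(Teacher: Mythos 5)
Your proof follows essentially the same route as the paper's: apply the reverse triangle inequality to the three laws $P_t^{r^{*}}(v)$, $(P_t^{\prime})^{r^{*}}(v)$, and $\pi$ at the step $r^{*}=\tau_{G_t}(\epsilon)$, use the fact that the averaged middle term is exactly $\mathrm{UD}(G_t,G_t^{\prime},r^{*})$ by Definition~\ref{SUDdefinitionwhole}, pass to the maximum over vertices, and invoke monotonicity of the distance-to-stationarity profile to conclude $\tau_{G_t^{\prime}}\left(\mathrm{UD}(G_t,G_t^{\prime},r^{*})-\epsilon\right)\ge r^{*}$. The only difference from the paper is bookkeeping: the paper takes maxima first and then bounds the maximum below by the vertex average, whereas you average the whole inequality and then observe that a mean above $\delta$ forces some term above $\delta$; these are the same argument.

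The one place you genuinely depart from the paper is your patch for the two stationary distributions, and that patch is not sound as stated. Theorem~\ref{degree_expectation} asserts $\mathbb{E}(\deg^{\prime}(v))=\deg(v)$, an equality of expectations over the randomness of the perturbation; in any given realization of $G_t^{\prime}$ the actual degrees (and hence $\pi_t^{\prime}(v)\propto\deg^{\prime}(v)$) differ from those of $G_t$, so you cannot conclude $\pi_t^{\prime}=\pi_t$ from it. To be fair, the paper's own proof silently writes a single $\pi$ throughout and never addresses this mismatch, so your argument is no weaker than the published one --- but the common-$\pi$ step should be flagged as an assumption (equivalently, the theorem read with both mixing times measured against $\pi_t$), rather than presented as a consequence of Theorem~\ref{degree_expectation}.
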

\vspace{-0.5em}
\begin{mythe}\label{SLEM} Let us denote the second largest eigenvalue modulus (SLEM)
 of transition probability matrix $P_t$ of graph $G_t$ as $\mu_{G_t}$. We can bound the SLEM of a perturbed graph
  $G_t^{\prime}$ using the mixing time of the original graph, and the utility distance between the graphs as $\mu_{G_t^\prime}\ge1-\frac{\log n +\log {\frac{1}{\mathrm{UD} (G_t,G_t^\prime ,\tau_{G_t} (\epsilon))-\epsilon}}}{\tau_{G_t} (\epsilon )}$.
\end{mythe}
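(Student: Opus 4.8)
The plan is to derive the SLEM lower bound from the classical relationship between mixing time and spectral gap, and then feed in the mixing-time comparison already established in Theorem~\ref{mixingtime}.

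First I would invoke the standard bound relating the mixing time of a reversible random walk to its SLEM: for any graph $G$ on $n$ vertices with transition matrix $P$, one has $\tau_\epsilon(G)\le (\log n+\log(1/\epsilon))/(1-\mu_G)$. This follows from the spectral decomposition of the reversible $P$, bounding $\|P^r(v)-\pi\|_{TV}$ by a constant multiple of $\mu_G^r$ and using the elementary inequality $-\log\mu_G\ge 1-\mu_G$. Rearranging gives the lower bound $\mu_G\ge 1-(\log n+\log(1/\epsilon))/\tau_\epsilon(G)$, valid for every accuracy parameter $\epsilon$.

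Next I would apply this lower bound to the perturbed graph $G_t^\prime$, but at the specific accuracy level $\epsilon^\prime:=\mathrm{UD}(G_t,G_t^\prime,\tau_{G_t}(\epsilon))-\epsilon$, yielding $\mu_{G_t^\prime}\ge 1-(\log n+\log(1/\epsilon^\prime))/\tau_{G_t^\prime}(\epsilon^\prime)$. The key step is then to replace the denominator $\tau_{G_t^\prime}(\epsilon^\prime)$ by the smaller quantity $\tau_{G_t}(\epsilon)$: Theorem~\ref{mixingtime} states precisely that $\tau_{G_t^\prime}(\epsilon^\prime)\ge\tau_{G_t}(\epsilon)$. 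Since the subtracted fraction has a nonnegative numerator, shrinking its denominator only enlarges the fraction and hence only decreases $1-(\cdot)$, so the inequality direction is preserved and $\mu_{G_t^\prime}\ge 1-(\log n+\log(1/\epsilon^\prime))/\tau_{G_t}(\epsilon)$. Substituting the definition of $\epsilon^\prime$ recovers the claimed bound exactly.

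The main obstacle is securing the clean $\log n$ form in the first step. The sharpest spectral bounds actually carry a factor $\log(1/\pi_{\min})$, with $\pi_{\min}$ the smallest stationary probability, rather than $\log n$; since the random walk on a social graph has the non-uniform stationary distribution $\pi_t(v)=\deg(v)/(2|E_t|)$, this term is $\Theta(\log n)$ only when the graph is close to regular or after absorbing degree-dependent constants, so I would either restrict to the near-regular regime or present the bound up to the usual $\log(1/\pi_{\min})\le\log n$-type simplification. A second, minor point is well-definedness: the argument needs $0<\epsilon^\prime\le 1$, i.e. the utility distance at horizon $\tau_{G_t}(\epsilon)$ must exceed $\epsilon$, so that $\log(1/\epsilon^\prime)$ is nonnegative and the monotonicity step above is valid.
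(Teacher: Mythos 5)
Your proof is correct and follows essentially the same route as the paper's: invert the standard bound $\tau_{\epsilon}(G)\le(\log n+\log(1/\epsilon))/(1-\mu_G)$ to lower-bound the SLEM of $G_t^\prime$ at the accuracy level $\epsilon^\prime=\mathrm{UD}(G_t,G_t^\prime,\tau_{G_t}(\epsilon))-\epsilon$, then invoke Theorem~\ref{mixingtime} to replace $\tau_{G_t^\prime}(\epsilon^\prime)$ in the denominator by the smaller quantity $\tau_{G_t}(\epsilon)$. Your two caveats --- that the sharpest spectral bounds carry $\log(1/\pi_{\min})$ rather than $\log n$ for non-regular walks, and that one needs $\epsilon^\prime>0$ for the logarithm and the monotonicity step to make sense --- are legitimate points of care that the paper's proof silently assumes.
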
}
\vspace{-1em}
\subsection{Upper Bound of Utility Distance}
\vspace{-1em}
\system{} aims to limit the
degradation of link privacy over time.
Usually, mechanisms that preserve privacy trade-off application
utility. In the following, we will theoretically derive an upper bound
on the utility distance for our algorithm. This corresponds
to a lower bound on utility that \system{} is guaranteed to provide.
\vspace{-1em}
\begin{mythe}\label{SUDupperbound} The  utility distance of \system{} is upper bounded by $2l$ times the sum of the utility distance of each community $\epsilon$ and the ratio cut $\delta_t$ for each $G_t$, i.e.
\begin{equation}
\mathrm{UD}(G_0,\cdots G_T,G_0^\prime,\cdots G_T^{\prime},l)
\le\frac{1}{T+1}\sum_{t=0}^T 2l(\epsilon+\delta_t)
\end{equation}
\end{mythe}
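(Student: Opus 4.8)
The plan is to first collapse the temporal average to a per-timestamp estimate, and then to separate the intra-community and inter-community contributions to the $l$-hop random-walk discrepancy. Since $\mathrm{UD}$ in Eq.~\ref{SUD} is the average over $t$ of $\frac{1}{|V_t|}\sum_{v\in V_t}\|P_t^l(v)-(P_t^\prime)^l(v)\|_{\mathrm{TV}}$, it suffices to prove the per-timestamp bound $\frac{1}{|V_t|}\sum_{v\in V_t}\|P_t^l(v)-(P_t^\prime)^l(v)\|_{\mathrm{TV}}\le 2l(\epsilon+\delta_t)$; summing with weight $1/(T+1)$ then gives the theorem. To isolate the two effects I would introduce the \emph{block-diagonal} walk $B_t$ obtained by deleting all inter-cluster edges and renormalizing the transition probabilities within each community (and likewise $B_t^\prime$ for $G_t^\prime$), and apply the triangle inequality
\begin{align*}
\|P_t^l(v)-(P_t^\prime)^l(v)\|_{\mathrm{TV}}
&\le \|P_t^l(v)-B_t^l(v)\|_{\mathrm{TV}}\\
&\quad+\|B_t^l(v)-(B_t^\prime)^l(v)\|_{\mathrm{TV}}\\
&\quad+\|(B_t^\prime)^l(v)-(P_t^\prime)^l(v)\|_{\mathrm{TV}}.
\end{align*}

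The middle term is exactly the intra-community discrepancy. Because $B_t$ and $B_t^\prime$ are block diagonal, $B_t^l$ and $(B_t^\prime)^l$ act independently on each community, so for $v$ in community $C$ the $v$-th row is the $l$-hop walk of the community subgraph and of its perturbation. Averaging $\frac{1}{|V_t|}\sum_v$ then regroups as $\sum_C \frac{|C|}{|V_t|}\cdot(\text{per-community UD of }C)$, a convex combination bounded by the per-community utility distance $\epsilon$. Hence the middle term contributes at most $\epsilon\le 2l\epsilon$, with no extra factor of $l$ since $\epsilon$ already measures the $l$-hop intra-community walk.

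For the two outer terms I would use the power-difference identity for stochastic matrices,
\begin{equation*}
P^l-Q^l=\sum_{j=0}^{l-1}P^{\,l-1-j}(P-Q)Q^{\,j},
\end{equation*}
together with the fact that right-multiplication by a stochastic matrix is a contraction in total-variation norm on zero-sum signed measures. Applied with $P=P_t$ and $Q=B_t$ this yields
\begin{equation*}
\|P_t^l(v)-B_t^l(v)\|_{\mathrm{TV}}\le\sum_{j=0}^{l-1}\sum_{w}[P_t^{\,l-1-j}(v)]_w\,\|P_t(w)-B_t(w)\|_{\mathrm{TV}}.
\end{equation*}
A direct one-step computation shows $\|P_t(w)-B_t(w)\|_{\mathrm{TV}}=\deg_{\mathrm{out}}(w)/\deg(w)$, the fraction of $w$'s edges leaving its community; bounding this by the ratio cut $\delta_t$ and using that each row $P_t^{\,l-1-j}(v)$ sums to one collapses the estimate to $l\delta_t$ for every $v$, hence for the average. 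The third term is handled identically for $G_t^\prime$: the inter-cluster perturbation rule (connecting marginal nodes with probability $\deg(v_a(i))\deg(v_b(j))|v_a|/(|E_{ab}|(|v_a|+|v_b|))$) together with the degree-preservation property of Theorem~\ref{degree_expectation} ensures the perturbed graph has essentially the same per-vertex cut ratio, so it too is at most $l\delta_t$. Adding the three pieces gives $\epsilon+2l\delta_t\le 2l(\epsilon+\delta_t)$.

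The main obstacle is the telescoping weight $[P_t^{\,l-1-j}(v)]_w$: because $\mathrm{UD}$ uses the uniform vertex average rather than the degree-stationary one, factoring $\delta_t$ out of $\sum_w[P_t^{\,l-1-j}(v)]_w(\cdot)$ is clean only if $\delta_t$ controls the inter-cluster fraction of \emph{every} vertex (i.e.\ the ratio cut is read as a worst-case per-vertex quantity); otherwise one must argue the walk does not over-concentrate on high-cut vertices. The second delicate point, and the place where the specific \system{} construction is essential rather than incidental, is certifying that the perturbation preserves the cut ratio closely enough that the $G_t^\prime$ term is also $O(l\delta_t)$, which is precisely what the inter-cluster perturbation design and Theorem~\ref{degree_expectation} are there to supply.
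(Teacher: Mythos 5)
Your decomposition (block-diagonal renormalized walks plus telescoping of matrix powers) is genuinely different from the paper's, but it has two concrete gaps, one of which you flag yourself and one of which you miss entirely. The flagged one is fatal as written: $\delta_t$ is the \emph{global} ratio cut $|E_{t\mathrm{-in}}|/|V_t|$, and it does not bound the per-vertex cut fraction $\deg_{\mathrm{out}}(w)/\deg(w)$ --- a vertex whose only edge crosses communities has cut fraction $1$ while $\delta_t$ can be arbitrarily small. Nor can you fall back on the uniform average of the cut fractions (which is at most $2\delta_t$), because your telescoping weights $[P_t^{\,l-1-j}(v)]_w$ are walk distributions, not uniform, and the walk can concentrate on high-cut vertices; pushed through honestly, your argument produces an edge-weighted (conductance-like) quantity $|E_{t\mathrm{-in}}|/|E_t|$ rather than $\delta_t$. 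The paper never re-weights by a walk distribution: it bounds the \emph{one-step} difference $\|P_t-P_t^{\prime,i}\|_{\mathrm{TV}}\le \epsilon+\delta_t$ by decomposing each row into its diagonal (intra-community) and off-diagonal (inter-community) blocks, with the inter-community mass aggregating to $|E_{t\mathrm{-in}}|/|V_t|=\delta_t$, and only then invokes $\|P^l-Q^l\|_{\mathrm{TV}}\le l\|P-Q\|_{\mathrm{TV}}$ as a single lemma. (That lemma hides the same weighting subtlety your telescoping exposes, but it is the tool the paper's proof rests on.)

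The second gap is structural: you never account for \emph{selective} perturbation, which is the actual content of the theorem for \system{}. An unchanged community at time $t$ is not freshly perturbed; its perturbed version is copied from $G_{t-1}^{\prime}$, so $C_{k(t)}^{\prime}$ is a perturbation of $C_{k(t-1)}$, not of $C_{k(t)}$, and the per-community $\epsilon$-closeness hypothesis cannot be applied to it directly. Your middle-term bound of $\epsilon$ (with no factor of $l$, obtained by reinterpreting $\epsilon$ as an $l$-hop quantity, whereas the paper treats it as a one-step TPM distance and pays the factor $l$) silently assumes every community of $G_t^{\prime}$ is $\epsilon$-close to the corresponding community of $G_t$ --- precisely what the algorithm does not guarantee. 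The paper's proof is two-stage for exactly this reason: stage one bounds an ``intermediate'' scheme (dynamic clustering plus independent per-community perturbation) by $l(\epsilon+\delta_t)$; stage two relates \system{} to that intermediate scheme by a triangle inequality through the previous timestamp, $\|P_{t(j)}-P_{t(j)}^{\prime}\|_{\mathrm{TV}}\le \|P_{t(j)}-P_{(t-1)(j)}\|_{\mathrm{TV}}+\|P_{(t-1)(j)}-P_{(t-1)(j)}^{\prime}\|_{\mathrm{TV}}+\|P_{(t-1)(j)}^{\prime}-P_{t(j)}^{\prime}\|_{\mathrm{TV}}$, using the unchanged-classification threshold $\epsilon_0\le\epsilon$, which costs a second $(\epsilon+\delta_t)$. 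That second stage is where the factor $2$ in the statement actually comes from; in your proof the $2$ arises from the two outer $l\delta_t$ terms plus slack, so even if your estimates were repaired, you would have proved a bound for a different (purely static, per-snapshot) scheme rather than for \system{}.
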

\vspace{-1em}
where $\delta_t$ denotes the number of inter-community links
over the number of vertices, and each community $C_{k(t)}$ within $G_{t}$ satisfies $\|C_{k(t)}-C_{k(t)}^{\prime}\|_{TV}\le\epsilon$. {We defer the proofs to the Appendix to improve readability}.\\
\indent Note that an upper bound on utility distance corresponds
to a lower bound on utility of our algorithm.
While better privacy usually requires adding more noise
to the original sequence to obtain the perturbed sequence,
thus we can see that LinkMirage is guaranteed to provide
a minimum level of utility performance.\\
\indent In the derivation process, we do not take specific
evolutionary pattern such as the overlapped ratio
into consideration, therefore our theoretical
 upper bound is rather loose. 
Next, we will show
that in practice, \system{}
achieves smaller utility distance (higher utility)
than the baseline approach of independent static
perturbations.
\vspace{-1em}
\subsection{Utility Experiments Analysis}
\vspace{-1em}
Fig.~\ref{SUDTUD}(a)(b) depict the utility
distance for the Google+ and the Facebook graph sequences, for varying perturbation degree $k$
and the application level parameter $l$.
We can also see that as $k$ increases, the distance metric
increases. This is natural since additional noise 
increase
the distance between probability distributions computed from
the original and the perturbed graph series. As the application parameter
$l$ increases, the distance metric decreases. This illustrates that
\system{} is more suited for security applications
that rely on macro structures, as opposed to applications
that require exact information about one or two hop neighborhoods. \final{Furthermore, our experimental results in Figure~\ref{pisces} and Table~\ref{modularity} also demonstrate the utility advantage of our \system{} over the approach of Mittal et al.~\cite{Mittal:NDSS13} in real world applications.}

\vspace{-1em}
\section{Related Work}
\noindent{\bf{Privacy with labeled vertices}} An important 
thread of research aims to preserve link privacy between
labeled vertices by obfuscating the edges,
i.e., by adding /deleting edges~\cite{Hay:csfacultypublication07,Mittal:NDSS13,Ying:SIAM08}.
These methods aim to randomize the structure of the social graph, while differing in
the manner of adding noise.  Hay et al. \cite{Hay:csfacultypublication07} perturb the
graph by applying a sequence of $r$ edge deletions and $r$ edge insertions.
The deleted edges are uniformly selected from the existing edges in the original graph while the added edges are uniformly selected
from the non-existing edges. However, neither the edge deletions nor edge insertions take any structural properties of the graph into consideration.
Ying and Wu \cite{Ying:SIAM08} proposed a new
perturbation method for preserving spectral properties, without analyzing its
privacy performance.\\
\indent Mittal et al. proposed a perturbation method in \cite{Mittal:NDSS13},
which serves as the foundation for our algorithm.
Their method deletes \emph{all} edges in the original graph, and replaces each edge
with a fake edge that is sampled based on the structural properties of the
graph. In particular, random walks are performed on the original graph
to sample fake edges. 
{As compared to the methods of Hay et al.~\cite{Hay:csfacultypublication07}
and Mittal et al.~\cite{Mittal:NDSS13}, \system{} provides up to 3x privacy improvement for static social 
graphs and up to 10x privacy improvement for dynamic social graphs.} \\
\indent Another line of research aims to preserve link privacy \cite{Hay:VLDB08} \cite{Zheleva:KDD08} by 
aggregating the vertices and edges into super vertices. Therefore, the privacy of links
within each super vertex is naturally protected. However, such approaches do not
permit fine grained utilization of graph properties, making it difficult to be applied to applications
such as social network based anonymous communication and Sybil defenses. \\
{\bf{Privacy with unlabeled vertices}} While the focus of our paper is on preserving link privacy 
in context of labeled vertices, an orthogonal line of research aims to provide privacy in the context of 
unlabeled vertices (vertex privacy) ~\cite{liu:sigmod08,sala:imc11,beato:percom13}. Liu et al. \cite{liu:sigmod08} proposed $k$-anonymity to anonymize unlabeled vertices by placing at least $k$ vertices at an equivalent level. Differential privacy provides a theoretical framework for perturbing aggregate information, and Sala et al. \cite{sala:imc11} leveraged differential privacy to privately publish social graphs with unlabeled vertices. We note that \system{} can also provide a 
foundation for preserving vertex privacy as stated in Section~\ref{vertex}. \newnew{Shokri et al. \cite{shokri2015privacy} addresses the privacy-utility trade-off by using game theory, which does not consider the temporal scenario.} \\
\indent We further consider anonymity in temporal graphs with unlabeled vertices. \emph{The time series data should be seriously considered}, since the adversaries can combine multiple published graph to launch enhanced attacks for inferring more information. \cite{tai:ICDE11,ding:globalcom11,Bhagat:WWW10} explored privacy degradation in vertex privacy schemes due 
to the release of multiple graph snapshots. These observations motivate our work, even though we focus on 
labeled vertices.\\
{\bf{De-anonymization}} In recent years, the security community has proposed a number of 
sophisticated attacks for de-anonymizing social graphs~\cite{Narayaran:S&P09,srivatsa:CCS12,nilizadeh:CCS14,ji:CCS14}. 
While most of these attacks are not applicable to link privacy mechanisms (their focus is on vertex privacy), they illustrate the importance 
of considering adversaries with prior information about the social graph\footnote{Burattin et al~\cite{burattin:arxiv14} 
exploited inadvertent information leaks via Facebook's graph API to de-anonymize social links; 
Facebook's new graph API (v2.0) features stringent privacy controls as a countermeasure.}. We perform a rigorous privacy analysis 
of \system{} (Section~\ref{privacy}) by considering a worst-case (strongest) adversary that knows the entire social graph except one link, 
and show that even such an adversary is limited in its inference capability. 


\vspace{-1em}
\section{Discussion}
{Privacy Utility Tradeoffs: \system{} mediates privacy-preserving access to users' social 
relationships. In our privacy analysis, we consider the worst-case 
adversary who knows the entire social link information except one link, which conservatively 
demonstrates the superiority of our algorithm over the state-of-the-art approaches. 
\system{} benefits many applications that depend on graph-theoretic properties of the social graph (as opposed to the exact set of edges). 
This also includes recommendation systems and E-commerce applications.}\\
\indent{Broad Applicability: While our theoretical analysis of \system{} relies on undirected links, the obfuscation algorithm 
itself can be generally applied to directed social networks.  Furthermore, our underlying techniques have broad 
applicability to domains beyond social networks, 
including communication networks and web graphs. 
}
\vspace{-1em}
\section{Conclusion}
\system{} effectively mediates privacy-preserving access to users' social relationships, {since 1) \system{} preserves key structural properties in the
social topology while anonymizing intra-community and inter-community links;} 2) \system{} provides rigorous guarantees for the anti-inference privacy, indistinguishability and anti-aggregation privacy, in order to defend against sophisticated threat models for both static and temporal graph topologies; 3) \system{} significantly
outperforms baseline static techniques in terms of both
link privacy and utility, which have been verified both theoretically and
experimentally using real-world Facebook dataset (with
\facebook{} links) and the large-scale Google+ dataset
(with 940M links).  \system{} enables the deployment of real-world social relationship based applications 
such as graph analytic, anonymity systems, and Sybil defenses while preserving the privacy of users' social relationships.
\vspace{-1em}

\bibliographystyle{IEEEtranS}
\bibliography{mybib} 
\vspace{-1em}
\footnotesize
\section{Appendix}
\noindent{\bf{A. Proof of the Upper Bound of Anti-aggregation Privacy}}
$\|P_t^k-P_t^\prime\|_{\mathrm{TV}}=\|P_t^k-\hat P_t^k\|_{\mathrm{TV}}\le\frac{1}{2|V_t|}\sum\nolimits_{v=1}^{|V_t|} \|P_t(v)P_t^{k-1}-\hat P_t(v) P_t^{k-1}\|_1+\frac{1}{2|V_t|}\sum\nolimits_{v=1}^{|V_t|}\|\hat P_t(v)P_t^{k-1}-\hat P_t(v) \hat P_t^{k-1}\|_1=\|P_t-\hat P_t\|_{\mathrm{TV}}+\|P_t^{k-1}-\hat P_t^{k-1}\|_{\mathrm{TV}}\le k \|P_t-\hat P_t\|_{\mathrm{TV}}.$
\noindent {\bf{B. Relationships with Differential Privacy}}
When considering differential privacy for a time series of graph sequence $\{G_i\}_{i=0}^t$, we have
$f(D)=P(\{G^\prime_i\}_{i=0}^t|\{\widetilde{G}_i(L_t)\}_{i=0}^t, L_t=1), f(D^\prime)=P(\{G^\prime_i\}_{i=0}^t|\{\widetilde{G}_i(L_t)\}_{i=0}^t, L_t=0)$. For a good privacy performance, we need 
$
P(\{G^\prime_i\}_{i=0}^t|\{\widetilde{G}_i(L_t)\}_{i=0}^t,L_t=1)\approx P(\{G^\prime_i\}_{i=0}^t|\{\widetilde{G}_i(L_t)\}_{i=0}^t,L_t=0)$.
\indent Since the probability of $\{G^\prime_i\}_{i=0}^t$
given $\{\widetilde{G}_i(L_t)\}_{i=0}^t$ as $P(\{G^\prime_i\}_{i=0}^t|\{\widetilde{G}_i(L_t)\}_{i=0}^t)
=P(\{G^\prime_i\}_{i=0}^t|\{\widetilde{G}_i(L_t)\}_{i=0}^t, L_t=1)P(L_t=1|\{\widetilde{G}_i(L_t)\}_{i=0}^t)
+P(\{G^\prime_i\}_{i=0}^t|\{\widetilde{G}_i(L_t)\}_{i=0}^t, L_t=0)P(L_t=0|\{\widetilde{G}_i(L_t)\}_{i=0}^t$, it is easy to see that if the condition for a good privacy performance holds, we have $P(L_t|\{{G}^\prime_i\}_{i=0}^t,\{\widetilde{G}_i(L_t)\}_{i=0}^t)=\frac{P(\{{G}^\prime_i\}_{i=0}^t|\{\widetilde{G}_i(L_t)\}_{i=0}^t, L_t)\times P(L_t|\{\widetilde{G}_i(L_t)\}_{i=0}^t)}{P(\{G^\prime_i\}_{i=0}^t|\{\widetilde{G}_i(L_t)\}_{i=0}^t)}\approx P(L_t|\{\widetilde{G}_i(L_t)\}_{i=0}^t)$, 
which is the same as in Definition \ref{bayesianprivacy} and
means that the posterior probability is similar to the prior
probability, i.e., the adversary is bounded in the information it can learn from the perturbed graphs.\\
\noindent{\bf{C. Proof of Theorem \ref{degree_expectation}: Expectation of Perturbed Degree}}
According to Theorem 3 in \cite{Mittal:NDSS13}, we have
$
\mathbb{E}(\deg_{com}^{\prime}(v))=\deg(v)
$, 
where $\deg_{com}^{\prime}(v)$ denotes the degree of $v$ after perturbation within community.
Then we consider the random perturbation for inter-community subgraphs. Since the probability for an edge to be chosen is $\frac{\deg(v_a(i))\deg(v_b(j))|v_a|}{|E_{ab}|(|v_a|+|v_b|)}$, the expected degree after inter-community perturbation satisfies
$\mathbb{E}(\deg^{\prime}_{inter}(v_a(i)))=\sum_j{\frac{\deg(v_a(i))\deg(v_b(j))(|v_a|+|v_b|)}{|E_{ab}|(|v_a|+|v_b|)}}\\
=\deg(v_a(i))$. Combining with the expectations under static scenario, we have
$
\mathbb{E}(\deg^\prime(v))=\deg(v).
$
%
\\ \noindent{\bf{D. Proof of the Upper Bound for the Utility Distance}}
We first introduce some notations and concepts.
We consider two perturbation methods in the derivation process below.
The first method is our dynamic perturbation method, which
takes the graph evolution into consideration. The second method is the intermediate method, where we only implement
dynamic clustering without selective perturbation.
That is to say, we cluster $G_t$, then perturb each community by the static method and
each inter-community subgraphs by randomly connecting the marginal nodes, independently.
We denote the perturbed graphs corresponding to the dynamic,
the intermediate method by
$G_t^{\prime}, G_t^{\prime,i}$ respectively. Similarly,
we denote the perturbed TPM for the two approaches by
$P_t^{\prime},P_t^{\prime,i}$.
To simplify the derivation process, we partition
the proof into two stages. In the first stage, we derive the UD upper bound for the intermediate perturbation method.
In the second stage, we derive the relationship between $G_{t}^{\prime,i}$ and $G_{t}^{\prime}$.
Results from the two stages can be combined to find the upper bound for the utility distance of \system{}.
Denoting the communities as $C_{1}, C_{2}, \cdot, C_{K_t}$ and the inter-community subgraphs as $C_{12}, C_{13}, \cdots$, we have
\begin{equation}
\begin{aligned}
&\|P_t-{P_{t}}^{\prime,i}\|_{\mathrm{TV}}\\
&=\left\| \begin{array}{ccc}
P_{t(1,1)}-P_{t(1,1)}^{\prime} & \ldots & P_{t(1,K_t)}-P_{t(1,K_t)}^{\prime}\\
P_{t(2,1)}-P_{t(2,1)}^{\prime}  &\ldots & P_{t(2,K_t)}-P_{t(2,K_t)}^{\prime}\\
\vdots & \vdots  & \ddots\\
P_{t(K_t,1)}-P_{t(K_t,1)}^{\prime} & \ldots & P_{t(K_t,K_t)}-P_{t(K_t,K_t)}^{\prime}
\end{array} \right\|_{\mathrm{TV}}\\
&=\frac{1}{|V_t|}{\sum\nolimits_{k=1}^{K_t}|V_t(k)| \|P_{t(k,k)}-P_{t(k,k)}^{\prime}\|_{\mathrm{TV}}}\\
&+\frac{1}{|V_t|}{\sum\nolimits_{k,j=1,k\ne j}^{K_t}|E_t{(k,j)}| \|P_{t(k,j)}-P_{t(k,j)}^{\prime}\|_{\mathrm{TV}}}\\
&\le\epsilon+\delta_t\\
\end{aligned}
\end{equation}
Here, $\delta_t$ is the ratio cut of the graph \cite{Jesus:EPIA09}, and
$\delta_t={|E_{t\mathrm{-in}}|}/{|V_t|}=\sum\nolimits_{k,j=1, k\ne j}^{K_t}{|E_t(k,j)|}/{|V_t|}$.
For arbitrary matrix $P$ and $Q$, we have
$
\|P^l-Q^l\|_{\mathrm{TV}}\le l\|P-Q\|_{\mathrm{TV}}
$.
Combining the above results, we have
\begin{equation}\label{upperboundintermediate}
\mathrm{UD}(G_t,G_{t}^{\prime,i},l)\le l \|P_t-P_{t}^{\prime,i}\|_{\mathrm{TV}}\le l\left(\epsilon+\delta_t\right)
\end{equation}
Then, we generalize the utility analysis of intermediate perturbation to our dynamic perturbation.
Assume that there are $K_{t}^{\mathrm{c}}$ out of $K_t$ clusters that are considered as changed, which would be perturbed independently, and $K_{t}^{\mathrm{u}}$ out of $K_t$ clusters are considered as unchanged, i.e., their perturbation would follow the perturbation manner in $G_{t-1}^{\prime}$. To simplify derivation, we use $P_{t(k)}$ instead of $P_{t(k,k)}$ to represent the TPM of the $k$-th community. Then, we have
\begin{equation}
\begin{aligned}
&\mathrm{UD}(G_t, G_t^{\prime},1)=\|P_t-P_t^\prime\|_{\mathrm{TV}}\\
&\le\frac{\sum_{k=1}^{K_{t}^{\mathrm{c}}}\|P_{t(k)}-P_{t(k)}^\prime\|_{\mathrm{TV}}+\sum_{j=1}^{K_{t}^{\mathrm{u}}}\|P_{t(j)}-P_{t(j)}^\prime\|_{\mathrm{TV}}}{|K_t|}+\delta_t\\
& \leq \frac{1}{|K_t|}\left(\sum\limits_{k=1}^{K_{t}^{\mathrm{c}}}\|P_{t(k)}-P_{t(k)}^\prime\|_{\mathrm{TV}} +\sum\limits_{j=1}^{K_{t}^{\mathrm{u}}}\left(\|P_{t(j)}-P_{(t-1)(j)}\|_{\mathrm{TV}}\right.\right.\\
&\left.\left.+\|P_{(t-1)(j)}-P_{(t-1)(j)}^\prime\|_{\mathrm{TV}}+\|P_{(t-1)(j)}^\prime-P_{t(j)}^\prime\|_{\mathrm{TV}}\right)\right)+\delta_t\\
&= \frac{\sum_{k=1}^{K_t}\|P_{t(k)}-P_t^{\prime}{(k)}\|_{\mathrm{TV}}}{|K_t|}+\frac{|K_{t}^{\mathrm{u}}|\epsilon_0}{|K_t|}+\delta_t\\
&\le \mathrm{UD}(G_t, G_t^{\prime,i},1)+\epsilon+\delta_t
\end{aligned}
\end{equation}
where $\epsilon_0$ denotes the threshold to classify a community as \textit{changed} or \textit{unchanged}. The last inequality comes from the fact that $\epsilon_0\le \epsilon$. Then, we can prove 
$\mathrm{UD}\left(G_t,G_t^{\prime},l\right)=\|P_t^l-(P_t^\prime)^l\|_{\mathrm{TV}}\leq l \|P_t-P_t^\prime\|_{\mathrm{TV}}\leq l \|P_t-P_{t}^{\prime,i}\|_{\mathrm{TV}}+l(\epsilon+\delta_t)=2l\left(\epsilon+\delta_t\right)$ and $
\mathrm{UD}(G_0,\cdots G_T,G_0^\prime,\cdots G_T^{\prime},l)
\le\frac{1}{T+1}\sum_{t=0}^T 2l(\epsilon+\delta_t).$\\
\noindent {\bf{E. Proof for Relating Utility Distance with Structural Metrics}}
From the definition of total variation distance, we have $\|P_v^r(G^{\prime}_t)-\pi\|_{TV}+\|P_v^r(G_t)-\pi\|_{TV}\ge \|P_v^r(G^{\prime}_t)-P_v^t(G)\|_{TV}$. Taking the maximum over all vertices, we have $\max\|P_v^r(G^{\prime}_t)+\pi\|_{TV}+\max\|P_v^r(G_t)-\pi\|_{TV}\ge \max \|P_v^r(G^{\prime}_t)-P_v^t(G)\|_{TV}$. Therefore, for $t\ge \tau_{G}(\epsilon)$,
$\max\|P_v^r(G^{\prime}_t)-\pi\|_{TV}\ge \max \|P_v^r(G^{\prime}_t)-P_v^t(G)\|_{TV}+\max\|P_v^r(G_t)-\pi\|_{TV}\ge\frac{\sum_{v=1}^{|V_t|}\|P_v^r(G^{\prime}_t)-P_v^t(G)\|_{TV}-\pi\|_{TV}}{|V_t|}-\epsilon=\mathrm{UD}(G_t,G_t^{\prime},\tau_{G}(\epsilon))-\epsilon$. Then, we have
$
\tau_{G_t^{\prime}}\left(\mathrm{UD}\left(G_t,G^{\prime}_t,\tau_{G_t}(\epsilon)\right)-\epsilon\right)\ge \tau_{G_t}(\epsilon)
$. It is known that the second largest eigenvalue modulus is related to the mixing time of the graph as $
\tau_{G^\prime_t}(\epsilon) \leq \frac{\log n +\log {\frac{1}{\epsilon}}}{1-\mu_{G^\prime_t}}$. From this relationship, we can bound the SLEM in terms of the mixing time as $1-\frac{\log n +\log {(\frac{1}{\epsilon}})}{\tau_{G^\prime_t}} \leq \mu_{G^\prime_t}$. Replacing $\epsilon$ with $\mathrm{UD}(G_t,G_t^\prime,\tau_{G_t} (\epsilon))-\epsilon$, we have
$1-\frac{\log n +\log {\frac{1}{\mathrm{UD} (G_t,G_t^\prime ,\tau_{G_t} (\epsilon))-\epsilon}}}{\tau_{G_t^\prime} (\mathrm{UD} (G_t,G_t^\prime ,\tau_{G_t} (\epsilon)))-\epsilon} \leq \mu_{G_t^\prime}$. Finally, we leverage $\tau_{G_t^\prime} (\mathrm{UD} (G_t,G_t^\prime,\tau_{G_t} (\epsilon)-\epsilon))\geq \tau_{G_t} (\epsilon)$ in the above equation, to obtain $
\mu_{G_t^\prime}\ge1-\frac{\log n +\log {\frac{1}{\mathrm{UD} (G_t,G_t^\prime ,\tau_{G_t} (\epsilon))-\epsilon}}}{\tau_{G_t} (\epsilon )}$.

\end{document}